\newtheorem{theorem}{Theorem}[section]
\newtheorem{corollary}[theorem]{Corollary}
\numberwithin{equation}{section}
\theoremstyle{definition}
\theoremstyle{remark}
\newtheorem{remark}{Remark} 
\newcommand{\clust}{\mathscr{C}}
\newcommand{\coin}{\xi}
\newcommand{\field}{\mathcal{S}}
\newcommand{\IP}{\mathbf{P}}
\newcommand{\IE}{\mathbf{E}}
\newcommand{\graph}{G}
\newcommand{\set}{\omega}
\newcommand{\sets}{\Omega}
\newcommand{\currs}{\Omega}
\newcommand{\flows}{\mathcal{F}}
\newcommand{\rcur}{\textnormal{curr}}
\newcommand{\drcur}{\textnormal{d-curr}}
\newcommand{\aflow}{\textnormal{a-flow}}
\newcommand{\match}{M}
\newcommand{\dimer}{\textnormal{dim}}
\newcommand{\cont}{C}
\newcommand{\matchs}{\mathcal{M}}
\title{On the double random current nesting field}
\author{Hugo Duminil-Copin\thanks{Institut des Hautes \'Etudes Scientifiques} \thanks{Universit\'e de Gen\`eve}\qquad \qquad Marcin Lis\thanks{University of Vienna}}
\date{\today}
\begin{document}
\maketitle

\begin{abstract}
We relate the planar random current representation introduced by Griffiths, Hurst and Sherman to the dimer model. More precisely, 
we provide a measure-preserving map between double random currents (obtained as the sum of two independent random currents) on a planar graph and dimers on an associated bipartite graph. We also define a nesting field for the double random current, which,
under this map, corresponds to the height function of the dimer model. 
As applications, we provide an alternative derivation of some of the bozonization rules obtained recently by Dub\'edat, and show that the spontaneous magnetization of the Ising model on a planar biperiodic graph vanishes
at criticality. 

\end{abstract}

\section{Introduction}

The goal of this paper is to present a new connection between the Ising model and dimers through double random currents, and to show some of its applications. The link between dimers and the Ising model has a long history that we will not describe in detail here (we refer the reader to the extensive literature for more information). The articles that we choose to mention in the introduction are the ones directly relevant to our new connection.

\subsection{Random currents and dimers}

The Ising model is a random configuration of $\pm 1$ spins. In this article we think of the spins as living on the faces of a planar graph $G=(V,E)$ with vertex set $V$ and edge set $E$. 
In  \cite{Pei36} Peierls used the so-called {\em low-temperature expansion} of the model to show the existence of an order-disorder phase transition in the Ising model on $\mathbb Z^2$. In this representation, configurations of spins assigned to the faces of $G$ are mapped to contour configurations on $G$. More precisely, for $B\subset V$, write $\mathcal{E}^B$ for the collection of sets of edges $\set\subseteq E$ such that the graph $(V,\set)$ has odd degrees at $B$ and
even degrees everywhere else. A connected component of $\set\in \mathcal{E}^B$ is called a \emph{contour}, and $\set$ itself is called a \emph{contour configuration}.
Each spin configuration on the faces of $G$ is naturally associated with the collection $\omega$ of edges bordering two faces with different spins. Clearly, $\omega$ belongs to $\mathcal E^\emptyset$, and conversely, every element of $\mathcal E^\emptyset$ is associated with exactly two spin configurations, one with spin $+1$ on the unbounded face, and one with spin~$-1$.

The low-temperature expansion is only one among many classical representations of the Ising model. A few years after Peierls, van der Waerden \cite{Wae41} introduced the {\em high-temperature
 expansion}, which was also fruitfully used to study the Ising model on arbitrary graphs. In \cite{GHS} Griffiths, Hurst and Sherman proposed to expand the partition function (or more complicated weighted sums) 
 of the Ising model into a power series in the inverse temperature and expressed it in terms of integer-valued functions on the edges of~$G$. This new method, later called the {\em random current representation}, is particularly useful when studying truncated spin correlations and has since then been a central tool in the study of the Ising model.
  
  \begin{figure}
		\begin{center}
			\includegraphics[scale=0.4]{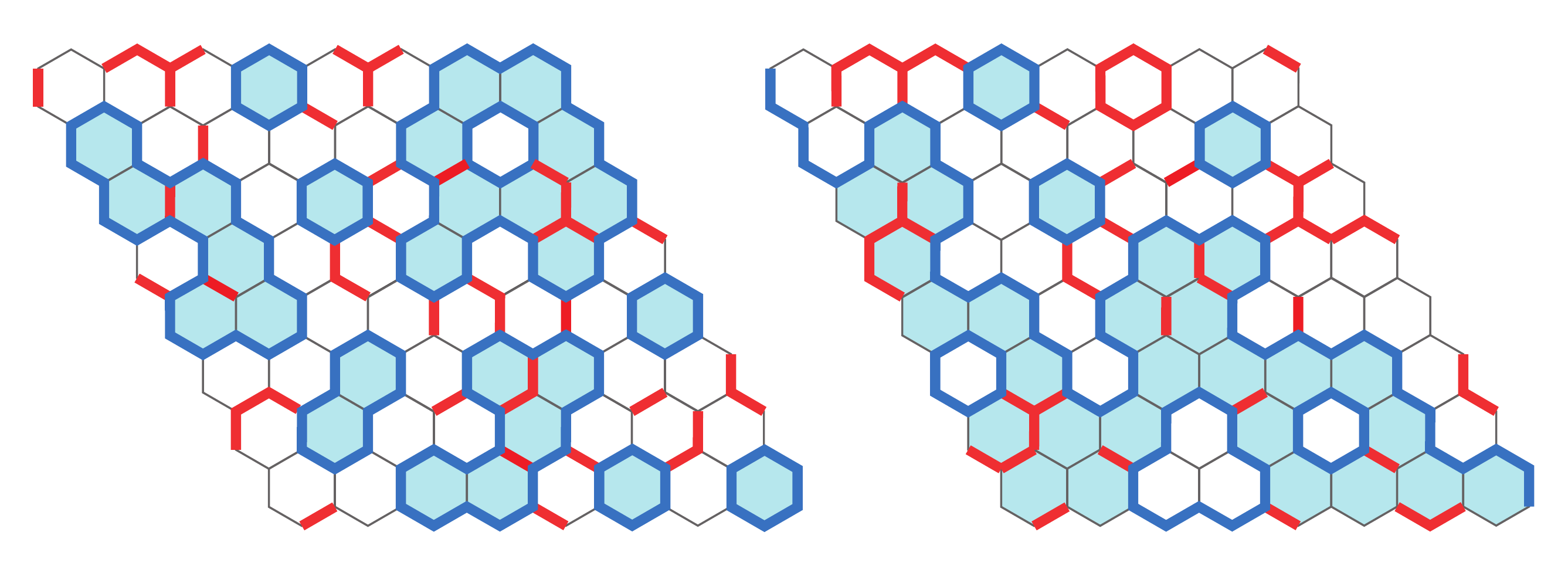}  
		\end{center}
		\caption{Two random current configurations on a piece of the hexagonal lattice with $\emptyset$, and $\{a,b\}$-boundary 
		conditions\ from left to right respectively, where $a$ is the top-leftmost and $b$ the bottom-rightmost vertex. 
		The odd edges are drawn in blue and the even edges in red.
		The colored faces are assigned spin $-1$ in the contour (low-temperature) expansion of the Ising model with $+$ and Dobrushin boundary conditions respectively.
		The interpretation of the odd edges of a current as contours of the Ising model is a consequence of the Kramers--Wannier duality~\cite{KraWan1}.}
	\label{fig:currents}
\end{figure}

In this article a \emph{current} on $G$ with {\em sources} $B\subset V$ is a set of edges $\set\subseteq E$ partitioned into two distinguished subsets
$\set_{\rm odd}\subseteq\set$ and $\set_{\rm even} \subseteq \set$, called \emph{odd} and \emph{even edges} respectively, such that $\set_{\rm odd}\in \mathcal{E}^B$
and $\set_{\rm even} = \set \setminus \set_{\rm odd}$. 
The set of all currents with sources $B$ will be denoted by $\sets^B$.
Let us also introduce the following probability measure on currents with sources $B$.
For each $e\in E$, fix $x_e \in [0,1]$ and set $p_e =1- \sqrt{1-x_e^2}$.
The random current model with sources $B$ is a probability measure
on $\sets^B$ given by
\begin{align}\label{eq:rc}
{\IP}^B_{ \rcur}(\set) = \frac 1{{Z}^B_{ \rcur}} \prod_{e\in \set_{\rm odd}} x_e \prod_{e \in \set_{\rm even}}p_e  \prod_{e\in E\setminus\set} (1-p_e), \qquad \text{for all }\set \in \sets^B,
\end{align}
where ${Z}^B_{ \rcur}$ is the partition function. 

\begin{remark}
Our definition of random currents is derived directly from the original one of Griffiths, Hurst and Sherman~\cite{GHS}, where a current is a function assigning to each edge a natural number.
It is left to the reader to check that our representation is obtained by forgetting the numerical value of the current but keeping the information about its parity and whether it is zero or not. More precisely, $\omega_{\rm odd}$ is the set of edges with odd current, $\omega_{\rm even}$ with strictly positive even current, and $E\setminus\omega$ with zero current.
\end{remark}

The random current model has been successful in several ways.  In the original article \cite{GHS}, it was used to derive correlation inequalities. In 1982 it was used by Aizenman \cite{Aiz82} to prove triviality of the Ising model in dimension $d\ge5$ and a few years later, Aizenman, Barsky and Fernandez proved that the phase transition is sharp \cite{AizBarFer87} (see also \cite{DumTas15} for an alternative proof). In recent years the representation has been the object of a revived interest. It was used to study the continuity of the phase transition (see below) and it was also related to other models. For instance,  a new distributional
relation between random currents, Bernoulli percolation and the FK-Ising model was discovered by Lupu and Werner \cite{LupWer}. 
For a more exhaustive account of random currents, we refer the reader to \cite{Dum16}.

In most applications, one considers pairs of independent current configurations. The reason comes from the combinatorial properties that this ``double current'' model enjoys. 
For two currents $\set$ and $\set'$, define the sum $\set+\set'$ to be the current with odd edges $\set_{\rm odd}\triangle \set'_{\rm odd}$ and even edges $(\set\cup \set') \setminus (\set_{\rm odd}\triangle \set'_{\rm odd})$, where 
$\triangle $ is the symmetric difference. 
This simply corresponds to addition mod 2 together with keeping track of whether the current is zero or not. 
Note that if $\set\in \sets^B$ and $\set' \in \sets^\emptyset$, then $\set+\set' \in \sets^B$.
Define the {\em double random current model} with sources~$B$ to be the probability measure on $\sets^B$ induced by the sum of two independent random currents
with sources\ $B$ and $\emptyset$:
\[
\IP^{B}_{\drcur}(\set) =\IP^B_{\rcur} \otimes \IP^{\emptyset}_{\rcur}(\{(\set',\set'') \in \sets^B \times \sets^\emptyset: \set' + \set'' = \set \}), \quad \text{for all }\set \in \sets^B.
\]

In \cite{LisT} the double random current model was represented in terms of so-called alternating flows studied by Talaska~\cite{talaska} in relation to the totally positive Grassmannian~\cite{postnikov}.
In this paper, inspired by the connection of another classical model of statistical physics, namely the dimer model, and the totally positive Grassmannian~\cite{PosSpeWil, Lam,LamNotes}, we relate
the double random current model to the dimer model. It turns out that our approach is also closely related to the correspondence between the double Ising model and the dimer model obtained by 
Dub\'edat \cite{Dub} (see Sec.\ \ref{sec:bozonization}).
Formally, the \emph{dimer model} is a probability measure on \emph{dimer covers} (also called \emph{perfect matchings}) of a graph, i.e.~sets of edges such that each vertex is incident on exactly one edge.  
We will now define a weighted graph $G^d$ on which the dimer model will be in a correspondence with double random currents. To this end, we proceed in two steps. We first define a directed graph $\vec\graph$ and then construct $G^d$ from it.

Let $\vec \graph$ be a directed graph with the same vertex set $V$ as $G$, and with edge set $\vec E$ defined as follows: 
each $e\in E$ is replaced by three parallel directed edges with the same endpoints as $e$, and such that that the middle edge $\vec e_m$ 
has the opposite orientation to the two side edges $\vec e_{s1}$ and $\vec e_{s2}$, see Fig.~\ref{fig:graphs}. The middle edge can be oriented arbitrarily, and
it is assigned weight $x_{\vec e_m} = \tfrac{2x_e}{1-x_e^2}$, whereas the side edges get weights $x_{\vec e_{s1}}=x_{\vec e_{s2}}=x_e$. 

\begin{figure}
		\begin{center}
			\includegraphics[scale=0.9]{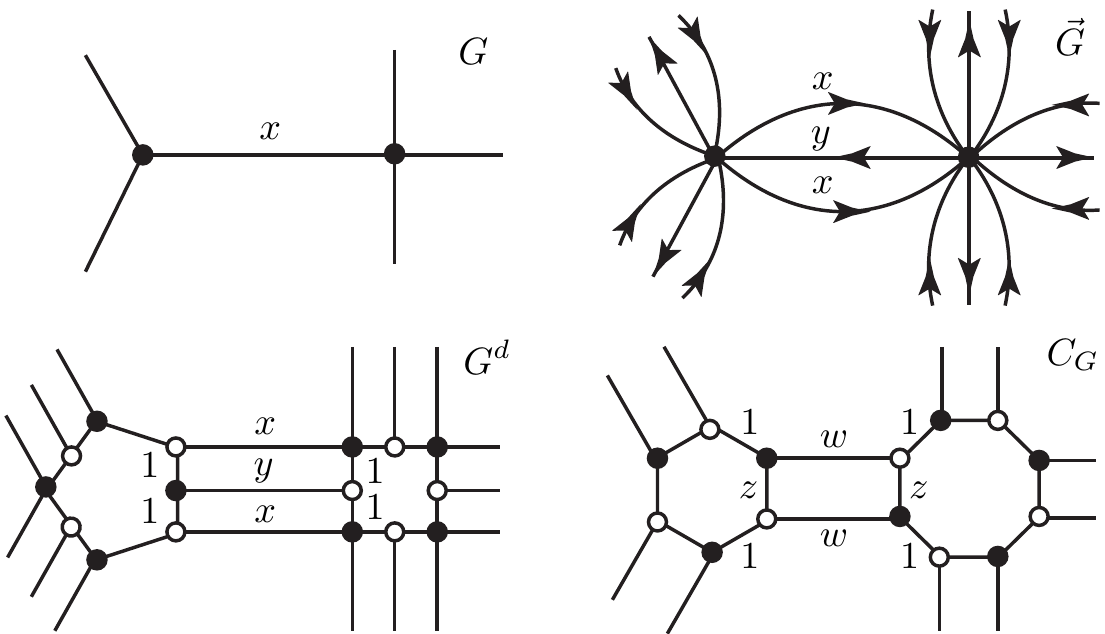}  
		\end{center}
		\caption{An example of the local structure of the graphs $G$, $\vec G$, $G^d$ and $C_G$. The weights satisfy $y=\tfrac{2x}{1-x^2}$, $w=\tfrac{2x}{1+x^2}$, $z=\tfrac{1-x^2}{1+x^2}$
		}
	\label{fig:graphs}
\end{figure}

The graph $G^d$ is constructed from $\vec G$ as follows (the reader may look at Fig.~\ref{fig:graphs} for an illustration). For a vertex $z$, let $r(z)$ be the number of pairs of consecutive edges in $\vec E$ around $z$ with the same orientation, and let $\text{deg}(z)$ be the degree of $z$.
Replace each $z$ with a cycle of $3\text{deg}(z)-r(z)$ edges, called \emph{short edges}. By construction, the length of the cycle is even, and hence  its vertices can be colored
black and white in an alternating way. Now, add \emph{long edges} corresponding to the edges of $\vec G$. We do it in such a way that if $(z,w)$ is a directed edge of $\vec G$, then
the corresponding edge in $G^d$ connects a white vertex in the cycle of $z$ with a black vertex in the cycle of $w$, and moreover, the cyclic order of edges around each cycle in $G^d$ 
matches the one in $\vec G$. The resulting graph $G^d$ is therefore \emph{bipartite}. We finish the construction by assigning weights.
The long edges inherit their weights from their counterparts in $\vec G$, and short edges get weight $1$. 

Let $\matchs^\emptyset$ be the set of dimer covers of $G^d$.
Define the dimer model probability measure with $\emptyset$ boundary conditions by
\begin{align}
\label{eq:dimer}
\IP^{\emptyset}_{\dimer} (\match) = \frac1{Z^{\emptyset}_{\dimer}}  \prod_{ e\in \matchs^\emptyset} x_{ e}, \qquad\text{for all } \match \in \matchs^\emptyset.
\end{align}

Let us now describe a mapping $\pi$ from the dimer covers of $G^d$ to current configurations on $G$. Consider a dimer cover $M$, and set $\pi(M)$ to be the current configuration $\omega\in\currs^\emptyset$ defined as follows: an edge $e$ of $G$ will be in $\omega_{\rm odd}$ (resp.~$\omega_{\rm even}$ and $E\setminus \omega$) if there is 1 or 3 dimers (resp.~2 and 0)  covering the three edges of $G^d$ associated with $e$ (see Fig.\ \ref{fig:currflowdimer}). One can check that the image of this map is included in $\Omega^{\emptyset}$, i.e., that the map always yields a sourceless current configuration.
 Let $\pi_*\IP_{\dimer}^\emptyset$ be the pushforward measure on $\currs^\emptyset$.
The main result of this paper is the following.
\begin{theorem} \label{thm:dimercurr}
For any finite simple planar graph $G$, we have $\pi_* \IP_{\dimer}^\emptyset=\IP_{\drcur}^{\emptyset}$.
\end{theorem}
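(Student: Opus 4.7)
Both $\pi_*\IP^\emptyset_\dimer$ and $\IP^\emptyset_\drcur$ are supported on $\sets^\emptyset$, so the theorem reduces to showing that the two measures assign proportional weights to every $\omega\in\sets^\emptyset$, the constant of proportionality being forced to equal $Z^\emptyset_\dimer/Z^\emptyset_\drcur$ upon summing the identity over $\omega$. My plan is to compute each side explicitly as a product over the edges of $G$.

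For the double-current side, the weight $\IP^\emptyset_\drcur(\omega)$ is a short convolution calculation. Using the identities $1-p_e=\sqrt{1-x_e^2}$ and $p_e(2-p_e)=x_e^2$, the pairs $(\omega_1,\omega_2)\in\sets^\emptyset\times\sets^\emptyset$ with $\omega_1+\omega_2=\omega$ contribute edge-by-edge a local factor equal to $1-x_e^2$ if $e\notin\omega$, to $2x_e$ if $e\in\omega_{\rm odd}$, and to $2x_e^2$ if $e\in\omega_{\rm even}$. Thus $w_\drcur(\omega)$ factorizes cleanly over $E$.

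On the dimer side, a cover $M\in\pi^{-1}(\omega)$ is specified by two independent pieces of data: (a) a long-edge configuration at each edge-triple obeying the count dictated by $\pi$, namely $0$, $1$ or $3$, or $2$ long dimers when $e$ is respectively absent, odd, or even in $\omega$; and (b) a short-edge matching of the cycle vertices at each $z\in V$ not saturated by long dimers. Since short edges carry weight $1$, (b) only contributes a combinatorial multiplicity: the saturated cycle vertices cut the even cycle at $z$ into arcs, and the remaining graph admits $0$, $1$, or (if nothing is removed) $2$ short-edge matchings according to whether every arc has even length. This parity constraint at $z$ globally couples the long-edge choices on all edges incident to $z$, and this coupling is the main obstacle.

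To overcome it, one evaluates the sum of long-edge weights at $z$ subject to the parity constraint and shows that it factorizes edge-by-edge, up to a $z$-dependent but $\omega$-independent constant determined only by $\deg(z)$ and $r(z)$, into the same per-edge factors $1-x_e^2$, $2x_e$, $2x_e^2$ that appeared on the double-current side. The middle-edge weight $y_e=2x_e/(1-x_e^2)$ is tuned precisely for this: it balances the triple-dimer weight $x_e^2 y_e$ against the short-edge multiplicities so as to collapse the four odd-state choices (three single-dimer plus one triple-dimer) to $2x_e$ and the three even-state choices to $2x_e^2$. Taking the product over $z\in V$ then yields the claimed identity. A more conceptual alternative, which rearranges the same combinatorics, is to factor $\pi$ through alternating flows on $\vec G$ by composing Talaska's weight-preserving bijection between dimer covers of $G^d$ and alternating flows with the measure-preserving map from alternating flows to double random currents established in \cite{LisT}.
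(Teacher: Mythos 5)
Your main computational plan is flawed at the very first step. You claim that the pairs $(\omega_1,\omega_2)\in\sets^\emptyset\times\sets^\emptyset$ with $\omega_1+\omega_2=\omega$ contribute a product over $E$ of local factors $1-x_e^2$, $2x_e$, $2x_e^2$, and hence that ``$w_\drcur(\omega)$ factorizes cleanly over $E$.'' It does not. You have correctly identified the per-edge weights, but you ignored the global constraint that \emph{both} $(\omega_1)_{\rm odd}$ and $(\omega_2)_{\rm odd}$ must lie in $\mathcal E^\emptyset$. Parametrizing the sum by $\eta_1:=(\omega_1)_{\rm odd}$ (an even subgraph of $(V,\omega)$, which forces $\eta_2=\eta_1\triangle\omega_{\rm odd}$), one finds that for each fixed admissible $\eta_1$ the per-edge contributions are $x_e$, $x_e^2$, $1-x_e^2$, and the number of admissible $\eta_1$ is the number of even subgraphs of $(V,\omega)$, namely $2^{|\omega|-|V|+k(\omega)}$ where $k(\omega)$ is the number of connected components of $(V,\omega)$ including isolated vertices. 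The correct weight is therefore
\[
\IP^\emptyset_\drcur(\omega)\ \propto\ 2^{|\omega|+k(\omega)}\prod_{e\in\omega_{\rm odd}}x_e\prod_{e\in\omega_{\rm even}}x_e^2\prod_{e\notin\omega}(1-x_e^2),
\]
which differs from your proposed $2^{|\omega|}\prod(\cdots)$ by the genuinely nonlocal factor $2^{k(\omega)}$. A concrete check: for $\omega$ equal to a $4$-cycle with all edges even, the true (unnormalized) weight is $2x^8$, not your $16x^8$ nor $\prod 2x_e^2$ with the normalization you intend.

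This error propagates to the dimer side of your plan: you assert that the constrained sum of long-edge and short-edge weights ``factorizes edge-by-edge, up to a $z$-dependent but $\omega$-independent constant determined only by $\deg(z)$ and $r(z)$.'' This cannot be right, because if both sides factorized into per-edge terms with only per-vertex constants, neither side could produce the cluster-counting factor $2^{k(\omega)}$, which the theorem requires. The correct bookkeeping on the dimer side is more subtle: the $2^{k(\omega)}$ arises from a global choice of orientation of the boundary of each cluster of $\omega$ (two choices per cluster), which is not a local function of the degrees $\deg(z)$, $r(z)$ or of the edge types at a single vertex. The last sentence of your proposal, factoring $\pi$ through alternating flows, is exactly what the paper does: the flow-to-current map $\theta$ of~\cite{LisT} accounts for the factor $2^{|\omega|+k(\omega)}$ via orientation choices for non-trivial clusters plus local edge choices, and the dimer-to-flow map $\eta$ contributes a further $2^{|V^c(F)|}$ from the isolated-vertex cycles. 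Treat that route as the proof, not as an afterthought; the edge-by-edge factorization you sketched as the primary route is not available.
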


\begin{remark}
The theorem can be extended to graphs that are properly embedded in an orientable surface. 
\end{remark}

\begin{figure}
		\begin{center}
			\includegraphics[scale=0.9]{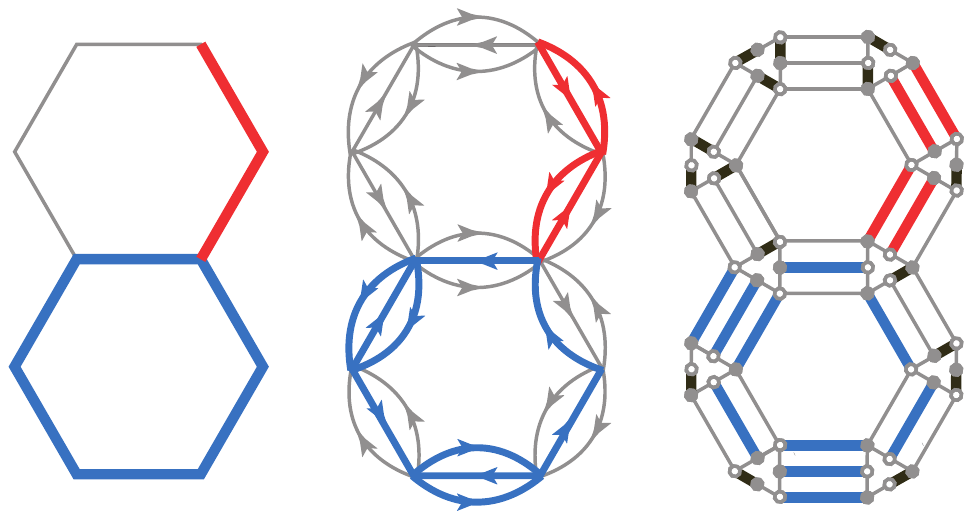}  
		\end{center}
		\caption[]{Left: A current configuration $\omega$ on $G$ with odd edges marked blue and even edges marked red. Center: 
		An alternating flow $F$ on $\vec G$ corresponding to $\omega$, i.e., such that $\theta(F)=\omega$, where $\theta$ is the map from Theorem~\ref{thm:firstmapping}. Right: A dimer cover $M$ on $G^d$
		associated with $F$ and $\omega$, i.e., such that $\eta(M)=F$ and $\pi(M)=\theta \circ \eta (M)=\omega$, where $\eta$ and $\pi$ are as in Theorem~\ref{thm:secondmapping} and Theorem~\ref{thm:dimercurr} respectively.
		
\indent Both $\eta$ and $\theta$ are many-to-one maps. In the example above, $|\theta^{-1}[\omega]|=2\times27$. The different possible orientations of the outer boundary of the flow account for the factor $2$ (see Fig.~\ref{fig:flowcurrs}), and every second odd edge of the cycle can be represented in exactly three ways, independently. Also, $|\eta^{-1}[F]|=4$ since each of the cycles of short edges in $G^d$ corresponding 
to an isolated vertex of $\omega$ can be covered by dimers in two ways, independently}
	\label{fig:currflowdimer}
\end{figure}

\subsection{The nesting field of a double random current} \label{sec:nestingfield} 
One of the main applications of Theorem~\ref{thm:dimercurr} is the study of the so-called {\em nesting field}. The idea behind introducing the nesting field is the interpretation of the contours of a current as level lines of a random surface whose discretization
is an integer-valued function defined on the faces of $G$.
The change in height of the discretized surface
when crossing a contour is either $+ 1$ or $-1$, and for two contours belonging to different clusters, the respective height changes are independent.

For a current $\set$, a connected component of the graph $(V,\set)$ will be called a \emph{cluster}. In particular, each contour $\cont$ of $\set_{\rm odd}$ (also called a contour of $\set$) is contained in a unique cluster of $\set$, and each
cluster $\clust$ of $\set$ gives rise to a contour configuration $\clust \cap \set_{\rm odd}$.
Call a cluster $\clust$ \emph{odd around a face} $u$ if the spin configuration associated via the low-temperature expansion with the contour configuration $\clust \cap \set_{\rm odd}$ assigns spin $-1$ to $u$ if the exterior face has spin $+1$. 

Let $(\coin_{\clust})$ be a family (indexed by clusters of $\omega$) of iid random variables equal to $+1$ or $-1$ with probability $1/2$. The \emph{nesting field} at $u$ is defined by
\[
\field_u = \sum_{\clust \text{ odd around } u }  \coin_{\clust},
\]
where the sum is taken over all clusters that are odd around $u$.  

One of the main features of Theorem~\ref{thm:dimercurr} is that it enables to connect the nesting field of a random current $\set$ drawn from the double random current measure to the height function associated with dimer covers of $G^d$. 
While the latter notion is classical, we still take a moment to recall it here. In the whole article, a {\em path} is a sequence of neighboring faces. 

To each dimer cover $M$ on $G^d$, we associate a \emph{$1$-form} $f_M$ (i.e.\ a function defined on directed edges which is antisymmetric under changing orientation) satisfying 
$f_M((z,w))=-f_M((w,z))=1$ if $\{z,w\} \in M$ and $z$ is white, and $f_M((z,w))=0$ otherwise. From now on, we fix a \emph{reference 1-form} $f_0$ 
given by $f_0((z,w))=-f_0((w,z))=1/2$ if $\{z,w\}$ is a short edge and $z$ is white, and $f_0((z,w))=0$ otherwise.

The \emph{height function} $h=h_M$ of a perfect matching $M$ is defined on the faces of $G^d$ and is given by
\begin{enumerate}[(i)]
\item $h(u_0)=0$ for the unbounded face $u_0$, 
\item for every other face $u$, choose a path $\gamma$ connecting $u_0$ and $u$, and define $h(u)$ to be the total flux of $f_M-f_{0}$ 
through $\gamma$, i.e., the sum of values of $f_M-f_{0}$ over the edges crossing $\gamma$ from left to right.
\end{enumerate}
The height function is well defined, i.e.\ independent of the choice of $\gamma$, since $f_M-f_{0}$ is a divergence-free flow.

Note that both the faces and vertices of $G$ are embedded naturally in the faces of $G^d$.
\begin{theorem} \label{thm:nestingfield}
The law of $h$ under $ \IP_{\dimer}^\emptyset$ restricted to the faces of $G$ is the same as the law of the nesting field $\field$ under $\IP_{\drcur}^{\emptyset}$.
\end{theorem}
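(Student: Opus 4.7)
The plan is to lift the problem to the dimer side using Theorem~\ref{thm:dimercurr}: under $\IP_\dimer^\emptyset$ the pushforward $\omega := \pi(M)$ has law $\IP_\drcur^\emptyset$, so it is enough to prove that, conditionally on $\omega$, the restriction of $h_M$ to the faces of $G$ has the law of the nesting field associated with $\omega$. Because the nesting field has the rigid form $\field_u = \sum_\clust \coin_\clust\, \mathbf{1}[\clust\text{ odd around }u]$ with iid uniform signs $\coin_\clust \in \{-1,+1\}$, my strategy is to build random signs $\sigma_\clust(M) \in \{-1,+1\}$ such that
\begin{equation*}
h_M(u) = \sum_\clust \sigma_\clust(M)\, \mathbf{1}[\clust\text{ odd around }u] \qquad \text{for every face } u \text{ of } G,
\end{equation*}
and then to check that, conditionally on $\omega$, the $\sigma_\clust(M)$ are iid uniform on $\{-1,+1\}$.

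The first step is to derive a local formula for the height. Fixing for each face $u$ of $G$ a path $\gamma_u$ from the exterior face $u_0$ to $u$ that only enters faces of $G$ and the small faces of $G^d$ lying between parallel long edges, I expand $h_M(u)$ as a sum of per-edge contributions, one for each $e \in E$ crossed by $\gamma_u$. A direct enumeration on the three parallel long edges of $G^d$ associated with $e$ should show that this contribution vanishes when $e \notin \omega_{\rm odd}$ (for even $e$, the three possible two-dimer configurations all give zero thanks to the divergence-free property of $f_M - f_0$ at the endpoints of $e$) and equals some $\epsilon_e(M) \in \{-1,+1\}$ when $e \in \omega_{\rm odd}$.

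Second, I organise the edge signs cluster by cluster. Closedness of $f_M - f_0$ forces the total flux along any closed path to vanish, which implies that the signs contributed by crossings of contours of a cluster $\clust$ are all consistent with a single bit $\sigma_\clust(M) \in \{-1,+1\}$ attached to $\clust$, and that two contours of the same cluster nested around $u$ cancel. This yields the displayed decomposition of $h_M(u)$, matching the nesting field formula with $\sigma_\clust$ in place of $\coin_\clust$.

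Finally, I establish that, conditionally on $\omega$, the signs $(\sigma_\clust(M))_\clust$ are iid uniform on $\{-1,+1\}$. The subgraphs of $G^d$ supporting distinct clusters of $\omega$ are disjoint, so the dimer measure on $\pi^{-1}[\omega]$ factorises over clusters. For each cluster $\clust$ I will exhibit a weight-preserving involution on $\pi^{-1}[\omega]$, obtained by flipping the matching along a suitable alternating cycle contained in the subgraph associated with $\clust$, that flips $\sigma_\clust$ while preserving $\omega$ and every other $\sigma_{\clust'}$. The hard part is the second step: identifying the global sign $\sigma_\clust$ and showing that nested contours of the same cluster cancel with the correct parity. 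This requires careful topological bookkeeping relying on the bipartite planar structure of $G^d$ together with the orientation conventions built into the definitions of $\vec G$ and $f_0$.
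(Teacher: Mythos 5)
Your high-level plan is the same as the paper's: decompose the height into a per-cluster sign times the indicator that the cluster is odd around $u$, and then argue that conditionally on $\omega=\pi(M)$ these signs are iid symmetric. The route you take, however, attempts to do this directly at the dimer level, whereas the paper interposes the alternating-flow representation $F=\eta(M)$, and the difference is not merely cosmetic.

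The gap is in your second step, and you partly acknowledge it. You claim that ``closedness of $f_M-f_0$'' forces the crossings of the contours of a fixed cluster $\clust$ to be governed by a single sign $\sigma_\clust(M)$, with nested contours cancelling. But closedness of $f_M-f_0$ only guarantees that $h_M$ is well-defined; it says nothing about how the $\pm1$ jumps across different contours of the same cluster are correlated with each other, nor about why a crossing of an even edge contributes $0$ to the increment. What actually makes these facts true is the constrained structure of the fibre $\pi^{-1}(\omega)$: once $\omega$ is fixed, the only remaining global degree of freedom per non-trivial cluster is the orientation of its boundary in the associated alternating flow $F=\eta(M)$, and that orientation is precisely the bit $\sigma_\clust$. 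This is established combinatorially in the proofs of Theorems~\ref{thm:firstmapping} and~\ref{thm:secondmapping}, and it is also what makes your third step (the weight-preserving sign-flipping involution) work: flipping the orientation of the cluster boundary is the involution, and Theorem~\ref{thm:firstmapping} shows it is weight-preserving. Without the flow picture (or an equivalent explicit description of $\pi^{-1}(\omega)$), the ``careful topological bookkeeping'' you defer is exactly the substance of the theorem and cannot be waved through. A further small point: the paper also needs, and records in the bozonization proof, that $h$ is constant on the faces of $G^d$ corresponding to vertices of a given cluster; this is the clean way to see the within-cluster coherence you are after, and it again comes from the flow/orientation structure together with the choice of $f_0$, not from closedness alone. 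So the approach is salvageable, but only after you import (or reprove) the analysis of $\pi^{-1}(\omega)$ that the paper packages as the alternating-flow theorems.
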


\begin{remark} \label{rem:torusfield}
Again, the theorem can be extended to graphs $G$ that are properly embedded in the torus. In this case, the total increment of the nesting field on $G$ between two faces $u$ and $v$, as it is the case for the dimer height function on $G^d$, is defined only up to homotopy of the path $\gamma$ connecting $u$ and $v$ along which the divergence free flows is summed up. We denote these increments by $\field_{\gamma}$ and $h_{\gamma}$ respectively, and conclude that $\field_{\gamma}$ drawn according to $\IP^{\emptyset}_{\drcur, G}$ has the same distribution as
$h_{\gamma}$ drawn according to $\IP^{\emptyset}_{\dim, G^d}$.
Also, after fixing $\gamma$, the increment $\field_{\gamma}$ is equal to the sum of the $\pm 1$ variables $\coin_{\clust}$ for the clusters $\clust$ that are {\em odd with respect to $\gamma$}, meaning that the contour configuration $\clust \cap \set_{\rm odd}$ crosses an odd number of edges of $\gamma$. 
\end{remark}

Consider an infinite {\em biperiodic} (i.e.~invariant under the action of a $\mathbb{Z}^2$-isomorphic lattice) planar graph $\mathbb G$. The graph $\mathbb G$ is assumed to be {\em nondegenerate}, in the sense that the complement of the edges is the union of topological disks (in other words, the faces are topological disks). Then, the dimer graph $\mathbb G^d$ constructed
as in the finite case, is biperiodic and bipartite.
The height function of dimers on biperiodic bipartite graphs has been studied in detail, for instance in \cite{KenOkoShe06}. Kenyon, Okounkov and Sheffield identified three possible behaviors depending on the phase: gaseous, liquid or frozen, in which the associated dimer model lies. In particular, the height function of dimers in the liquid phase, which is specified by the property that the characteristic polynomial has  zeroes on the torus $\mathbb T^2$, has unbounded fluctuations.
Let $\mathbb G_n = \mathbb G / (n\mathbb Z\oplus n \mathbb Z)$. 
The relation between the nesting field and the height function of dimers can be hence combined with Theorem~4.5 of \cite{KenOkoShe06} to give the following.

\begin{corollary} \label{cor:4}
Assume that the characteristic polynomial of the dimer model on $\mathbb G^d$ has a real zero on the torus $\mathbb T^2$, then 
\[
\lim_{n\to \infty} \IE_{\drcur, \mathbb G_n}^{\emptyset}[\mathcal S_{\gamma}^2]=\tfrac{1}\pi\log[|\phi(u)-\phi(v)|] +o(\log[|\phi(u)-\phi(v)|]),
\]
where the limit is taken for a fixed path $\gamma$ connecting $u$ and $v$, $\IE^{\emptyset}_{\drcur, \mathbb G_n}$ is the expectation with respect to $\IP^{\emptyset}_{\drcur}$ on $\mathbb G_n$, 
and $\phi$ is a linear bijection from $\mathbb R^2$ to $\mathbb R^2$.
\end{corollary}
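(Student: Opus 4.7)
The plan is to reduce the statement to the known logarithmic height fluctuations for bipartite biperiodic dimer models in the liquid phase, via the nesting-field/height-function identification. Since the right-hand side is exactly of the form appearing in Theorem~4.5 of Kenyon--Okounkov--Sheffield, the main task is to justify that we may transfer the question from $\IP^{\emptyset}_{\drcur,\mathbb G_n}$ to $\IP^{\emptyset}_{\dimer,\mathbb G_n^d}$ and then verify that the KOS assumptions are met.

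First, I would invoke the torus version of Theorem~\ref{thm:nestingfield} recorded in Remark~\ref{rem:torusfield}: fixing a path $\gamma$ on $\mathbb G_n$ connecting two faces $u$ and $v$, the random variable $\field_{\gamma}$ under $\IP^{\emptyset}_{\drcur,\mathbb G_n}$ is equidistributed with the height increment $h_{\gamma}$ of the dimer model on $\mathbb G_n^d$. In particular,
\[
\IE^{\emptyset}_{\drcur,\mathbb G_n}[\field_{\gamma}^2]
=\IE^{\emptyset}_{\dimer,\mathbb G_n^d}[h_{\gamma}^2].
\]
Because $u$ and $v$ are faces of $\mathbb G$, they are also faces of $\mathbb G^d$ (by the construction of $\mathbb G^d$ from $\vec{\mathbb G}$), so the dimer increment is literally evaluated between the same two faces. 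The construction of $\mathbb G^d$ from a nondegenerate biperiodic planar graph $\mathbb G$ produces a bipartite biperiodic planar graph, and the weights are positive, so $\mathbb G_n^d$ fits exactly into the KOS framework.

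Second, I would check that the hypothesis of Corollary~\ref{cor:4} is precisely the KOS characterization of the \emph{liquid} (rough) phase: the characteristic polynomial $P(z,w)$ of the bipartite periodic dimer model has a real zero on the unit torus $\mathbb T^2$ if and only if the Gibbs measure is in the liquid phase. In that regime, Theorem~4.5 of \cite{KenOkoShe06} asserts that the variance of the dimer height increment between two distant faces grows as $\tfrac{1}{\pi}\log|\phi(u)-\phi(v)|+o(\log|\phi(u)-\phi(v)|)$ for a linear bijection $\phi\colon\mathbb R^2\to\mathbb R^2$ determined by the amoeba / Newton polygon of $P$. Substituting this into the previous display yields the claimed asymptotics.

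The only delicate point, and hence the main obstacle, is matching the conventions on the torus: one must be sure that ``the same path $\gamma$'' indeed produces the same homotopy class on both $\mathbb G_n$ and $\mathbb G_n^d$, that the normalization of the height function used in Theorem~\ref{thm:nestingfield} coincides with the one used by KOS (so that the universal constant $1/\pi$ is not rescaled), and that the KOS result, which is stated for the height function on faces of $\mathbb G_n^d$ at large Euclidean separation, applies when $u,v$ are restricted to the subclass of faces coming from $\mathbb G$. The first two are bookkeeping checks built into Theorem~\ref{thm:nestingfield} and its torus extension; the third is immediate because the embedding of $\mathbb G$-faces into $\mathbb G^d$-faces is a biperiodic injection, so Euclidean distance in $\mathbb G$ is comparable to Euclidean distance in $\mathbb G^d$ and the $o(\log)$ correction absorbs the constant.
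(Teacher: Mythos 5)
Your proposal matches the paper's own (very brief) argument: the paper states the corollary follows by combining the torus version of Theorem~\ref{thm:nestingfield} (Remark~\ref{rem:torusfield}), which identifies $\field_\gamma$ under $\IP^{\emptyset}_{\drcur,\mathbb G_n}$ with $h_\gamma$ under $\IP^{\emptyset}_{\dimer,\mathbb G_n^d}$, with Theorem~4.5 of Kenyon--Okounkov--Sheffield for the logarithmic variance of the dimer height increment in the liquid phase. Your additional bookkeeping checks (bipartiteness and biperiodicity of $\mathbb G^d$, liquid-phase characterization, normalization, and comparability of distances) are consistent with what the paper leaves implicit.
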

We will not use the specific form of $\phi$, but let us say that it is expressed in terms of the characteristic polynomial.

\subsection{Application 1: Bozonization rules for the Ising model} \label{sec:bozonization}For a finite planar graph $G=(V,E)$, define the set $\Sigma_G$ of configurations $\sigma$ assigning to each vertex $u\in V$ a spin $\sigma_u$, equal to $+1$ or $-1$. The distribution of the Ising model with free boundary conditions on $G$ at inverse temperature $\beta$ and with coupling constants $(J_e)_{e\in E}$ is defined on $\Sigma_G$ by $$\mu_{G,\beta}^{\rm f}(\sigma)=\frac{1}{Z_{\rm Ising}}\exp \Big (-\beta {\bf H}_G(\sigma)  \Big)\qquad\text{for all } \sigma\in\Sigma_G,$$
where 
${\bf H}_G=-\sum_{\{u,v\} \in E} J_{\{u,v\}}\,\sigma_u\sigma_v$ is the Hamiltonian of the model.
 
By construction, the Ising model  is related to the double random current on $G$ with parameters $x_e=\tanh(\beta J_e)$ and hence, Theorem~\ref{thm:dimercurr} gives a connection between the Ising model and dimers on a bipartite graph. It is known since \cite{Fis66} that the Ising model on a graph $G$ is related to a dimer model on a modified graph, called the Fisher graph of $G$. This connection enables to express the partition function of the former model in terms of the partition function of the later, which is more amenable to computations. The Fisher graph of $G$ is not bipartite, a fact which renders the study of the dimer model on it more difficult. 

Recently, Dub\'edat \cite{Dub} (see also \cite{BoudeT}) proved that the Ising model can be related to a dimer model on a bipartite graph $C_G$ where each edge of $G$ is replaced by a quadrilateral and each vertex of degree $d$ by a $2d$-gon face (see Fig.~\ref{fig:graphs}). The dimer model defined in this article on $G^d$ can in fact be mapped to the dimer model on $C_G$ with weights as in Fig.~\ref{fig:graphs} via an explicit sequence of vertex splittings and urban renewals (operations which partially preserve the distribution of dimers, and in particular, the height function, see Remark~\ref{rmk:1}). This means that Dub\'edat's mapping and our mapping are two facets of the same relation.

In \cite{Dub}, Dub\'edat derived powerful bozonization rules expressing the square of averages of order and disorder variables in terms of averages of certain observables of the height function of a dimer model. Here, we provide an alternative proof of some of these relations (Lemma 3 of~\cite{Dub}). Before stating the result, we define the notion of a disorder variable. A {\em disorder line} $\ell$ is a continuous curve drawn in the plane in such a way that it avoids $V$ and crosses $E$ finitely many times. The {\em disorder variable} $\mu_\ell$ associated with $\ell$  corresponds to the change of the Hamiltonian
flipping the coupling $ J_e$ to $-J_e$ for edges $ e \in E$ which are traversed an odd number of times by  $\ell$.
Correspondingly, the correlation function involving a collection of disorder variables $(\mu_{\ell_j})_{1\leq j\le n}$ and a function $F:\Sigma_G\rightarrow\mathbb C$  is defined by
\begin{equation}  \label{eq:def_tau}
\mu_{G,\beta}^{\rm f}\Big[  F  \prod_{j=1}^n \mu_{ \ell_j} \Big]
:= \mu_{G,\beta}^{\rm f}\Big[  F \exp(-\beta\sum_{e\in E_{\rm odd}}2J_e\sigma_x\sigma_y)\Big]
,
\end{equation}
where $E_{\rm odd}$ is the set of edges $e\in E$ crossed an odd number of times by $\cup_{j=1}^n\ell_j$.
Recall that the faces and vertices of $G$ are embedded naturally in the faces of $G^d$, and hence, with a slight abuse of notation, we can speak of the height function evaluated at a vertex or a face of $G$.
\begin{theorem}\label{thm:bozonization}
Consider a finite planar graph $G$, and the dimer model on $G^d$ with the associated weights. For any vertices $x_1,\dots,x_k$ and any disordered lines $\ell_1,\dots,\ell_n$ starting from the unbounded face $u_0$ and ending in the faces $u_1,\dots,u_n$ respectively, we have that 
\begin{equation}
\mu_{G,\beta}^{\rm f}\Big[\prod_{i=1}^k\sigma_{x_i}\times \prod_{j=1}^n\mu_{\ell_j}\Big]^2={\bf E}_{\rm dim}^\emptyset\Big[\prod_{i=1}^k\sin(\pi h_{x_i})\times\prod_{j=1}^n\cos(\pi h_{u_j})\Big].
\end{equation}
\end{theorem}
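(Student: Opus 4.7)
The plan is to apply the random-current / high-temperature expansion to the left-hand side, reorganize the resulting sum as an expectation over sourceless double currents, and then invoke Theorems~\ref{thm:dimercurr}--\ref{thm:nestingfield} to pass to a dimer expectation on $G^d$. First, the van der Waerden expansion of the Ising model with disorder variables gives
\[
\mu_{G,\beta}^{\rm f}\Big[\prod_{i=1}^{k}\sigma_{x_{i}}\prod_{j=1}^{n}\mu_{\ell_{j}}\Big]=\frac{1}{Z_{\rm HT}^{\emptyset}}\sum_{\omega\in\mathcal{E}^{A}}\prod_{e\in\omega}x_{e}\,(-1)^{|\omega\cap E_{\rm odd}|},
\]
with $A=\{x_{1},\dots,x_{k}\}$, $x_{e}=\tanh(\beta J_{e})$, and $E_{\rm odd}\subseteq E$ the set of edges crossed an odd number of times by $\bigcup_{j}\ell_{j}$. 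Squaring, using the mod-$2$ identity $|\omega_{1}\cap E_{\rm odd}|+|\omega_{2}\cap E_{\rm odd}|\equiv|(\omega_{1}\triangle\omega_{2})\cap E_{\rm odd}|$, and recalling that $Z^{B}_{\rcur}=Z^{B}_{\rm HT}$, one rewrites the squared correlator as a weighted sum over pairs $(\omega^{(1)},\omega^{(2)})\in\sets^{A}\times\sets^{A}$ of random currents, with a sign $(-1)^{|(\omega^{(1)}+\omega^{(2)})_{\rm odd}\cap E_{\rm odd}|}$ that depends only on the sum $\omega\in\sets^{\emptyset}$.

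Next, I would reorganize the sum by $\omega$. A direct per-edge computation shows that for fixed $\omega\in\sets^{\emptyset}$ the total weight of splittings into pairs in $\sets^{A}\times\sets^{A}$ equals the corresponding sourceless total weight times $\prod_{\clust}\mathbf{1}[|A\cap\clust|\text{ even}]$, where the product runs over clusters of $\omega$. Encoding this parity constraint via the nesting-field coins as $\prod_{\clust}\mathbf{1}[|A\cap\clust|\text{ even}]=\IE_{\coin}\big[\prod_{i=1}^{k}\coin_{\clust(x_{i})}\big]$, and using the topological fact that a loop in the plane crosses a path from $u_{0}$ to $u_{j}$ an odd number of times iff it separates them, one obtains $|\omega_{\rm odd}\cap E_{\rm odd}|\equiv\sum_{j}N_{j}(\omega)\pmod 2$ with $N_{j}$ the number of clusters of $\omega$ odd around $u_{j}$; since $\field_{u_{j}}\equiv N_{j}\pmod 2$, this yields
\[
\mu_{G,\beta}^{\rm f}\Big[\prod_{i}\sigma_{x_{i}}\prod_{j}\mu_{\ell_{j}}\Big]^{2}=\IE_{\drcur}^{\emptyset}\Big[\IE_{\coin}\big[\prod_{i=1}^{k}\coin_{\clust(x_{i})}\big]\prod_{j=1}^{n}\cos(\pi\field_{u_{j}})\Big].
\]

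Finally, Theorems~\ref{thm:dimercurr}--\ref{thm:nestingfield} transport everything to the dimer side: the cosine factor becomes $\prod_{j}\cos(\pi h_{u_{j}})$ directly. For the source factor, one needs a refinement of Theorem~\ref{thm:nestingfield} covering the faces of $G^{d}$ that correspond to vertices of $G$: at such a vertex-face $x$, the reference $1$-form $f_{0}$ accumulates a half-integer offset along the short-edge cycle bounding $x$, so $h_{x}$ takes half-integer values and $\sin(\pi h_{x})\in\{\pm 1\}$; showing that this $\pm 1$ value coincides in joint law with the cluster-spin $\coin_{\clust(x)}$ gives $\IE_{\coin}[\prod_{i}\coin_{\clust(x_{i})}]\mapsto\prod_{i}\sin(\pi h_{x_{i}})$ under the pushforward, closing the argument. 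The main obstacle is precisely this last refinement: while the stated Theorem~\ref{thm:nestingfield} identifies $h$ with the integer-valued nesting field at faces of $G$, the bozonization formula requires the half-integer behavior of $h$ at the vertex-cycle faces of $G^{d}$ and its interpretation as a $\pm 1$ cluster-spin observable, which calls for a local analysis of $f_{0}$ and of the dimer-to-current map $\pi$ around the short-edge cycles of $G^{d}$ in the spirit of (but distinct from) the proof of Theorem~\ref{thm:nestingfield}.
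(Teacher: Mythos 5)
Your overall strategy matches the paper's: expand the LHS via random currents, pass to a sourceless double current via switching, reinterpret the sign and parity constraints as cosine and sine observables of the nesting field, and push forward to the dimer side. However, there are two points to address.

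First, the step you describe as ``a direct per-edge computation'' --- namely that the total weight of splittings of a fixed $\omega\in\sets^{\emptyset}$ into pairs in $\sets^{A}\times\sets^{A}$ equals the sourceless total weight times $\prod_{\clust}\mathbf{1}[|A\cap\clust|\text{ even}]$ --- is \emph{not} a per-edge computation. The factor $\prod_{\clust}\mathbf{1}[|A\cap\clust|\text{ even}]$ is a constraint on the clusters of $\omega$, which are global objects; the count of admissible splittings of $\omega_{\rm odd}$ into two sets with prescribed odd-degree vertices depends on the cycle space of the cluster graph, not on edges independently. This is precisely the content of the \emph{switching lemma}, which the paper invokes directly (with references to Aizenman's original argument). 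You should cite it rather than claim an elementary derivation, since the per-edge reasoning would simply fail.

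Second, you correctly isolate the genuine difficulty and you are honest that you have not resolved it: Theorem~\ref{thm:nestingfield} identifies $h$ with $\field$ only on faces of $G$, while the sine observable lives on the vertex-cycle faces of $G^{d}$, where $h$ is half-integer. You assert that $\sin(\pi h_{x})$ should ``coincide in joint law with $\coin_{\clust(x)}$'' but give no argument. The paper closes this gap via the alternating-flow coupling $M\mapsto F=\eta(M)\mapsto\omega=\theta(F)$: the structure of $G^{d}$ and the reference $1$-form $f_{0}$ force $h$ to be \emph{constant on the vertex-faces within each cluster} of $\omega$ and to take values in $\tfrac12+\mathbb Z$ there; moreover $h$ at these faces is determined by $F$, and flipping the boundary orientation $\xi_{\clust}(F)$ of a cluster $\clust$ (which, conditionally on $\omega$, is an independent fair coin for each cluster, by the proof of Theorem~\ref{thm:firstmapping}) flips the common sign of $\sin(\pi h_{x})$ for all $x\in\clust$. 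Hence $\IE_{\dimer}^{\emptyset}[\prod_{i}\sin(\pi h_{x_{i}})\mid\pi(M)=\omega]=\mathbb{I}_{\omega\in\mathcal F_{X}}$, which is exactly the missing identification. Without this piece your argument does not close, so as written the proposal has a real gap --- but it is the same gap the paper fills, by the same route, so with the flow coupling supplied the proof would be essentially identical to the paper's.
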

Note that for a vertex $x$ and a face $u$, $\sin(\pi h_x)=(-1)^{h_x-1/2}$ and $\cos(\pi h_u)=(-1)^{h_u}$.
Also, the fact that the disorder lines are starting on the unbounded face $u_0$ is a convenient convention to state the result elegantly in terms of the notation introduced in the previous section. The theorem can be extended to graphs $G$ properly embedded in the torus with appropriate modifications.

\subsection{Application 2: Continuity of the phase transition for the Ising model on biperiodic planar graphs} For a finite subgraph $G=(V,E)$ of a nondegenerate biperiodic graph $\mathbb G=(\mathbb V,\mathbb E)$, define the set $\Sigma_G^+$ of configurations $\sigma$ assigning to each vertex of $\mathbb G$ a spin $\sigma_u$, equal to $+1$ or $-1$, with the additional constraint that any vertex of $\mathbb V\setminus V$ receives a spin $+1$. The distribution of the Ising model with $+$ boundary conditions on $G$ at inverse-temperature $\beta$ and with coupling constants $(J_e)_{e\in E}$ is defined on $\Sigma_G^+$ by
$$\mu_{G,\beta}^+(\sigma)=\frac{1}{Z_{\rm Ising}}\exp \Big (-\beta {\bf H}^+_G(\sigma) \Big)\qquad\text{for all } \sigma\in\Sigma_G,$$ 
with
${\bf H}_G^+:=-\sum_{\{u,v\}} J_{\{u,v\}}\,\sigma_u\sigma_v$, where the sum is over edges $\{u,v\}$ intersecting $V$.
A measure $\mu_{\mathbb G,\beta}^+$ can be defined on $\mathbb G$ by taking the weak limit of the measures $\mu_{G,\beta}^+$.
The model undergoes an order/disorder phase transition on $\mathbb G$ at a critical inverse-temperature $\beta_c=\beta_c(\mathbb G)$ characterized by  the property that 
$\mu_{\mathbb G,\beta}^+[\sigma_u]=0$ if $\beta<\beta_c$ and $\mu_{\mathbb G,\beta}^+[\sigma_u]>0$ if $\beta>\beta_c$, where $u$ is an arbitrary vertex of $\mathbb G$.

In \cite{CimDum13}, the critical parameter $\beta_c$ of the Ising model was proved to correspond to the only value of $\beta$ for which the dimer model introduced in \cite{Dub} on $C_{\mathbb G}$ (and therefore the one defined here on $\mathbb G^d$) is in the liquid phase. Here, we combined this result with the information above to prove the following statement.
\begin{theorem}\label{thm:continuity}
Let $\mathbb G$ be a nondegenerate infinite biperiodic planar graph, then $$\mu_{\mathbb G,\beta_c}^+[\sigma_u]=0.$$\end{theorem}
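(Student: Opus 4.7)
The plan is to argue by contradiction using three inputs: (a) by \cite{CimDum13}, $\beta_c(\mathbb G)$ is the unique inverse temperature at which the dimer model on $\mathbb G^d$ is in the liquid phase, so Corollary~\ref{cor:4} applies; (b) the random-current characterization of the spontaneous magnetization, which via the switching lemma forces $\IP^\emptyset_{\drcur,\mathbb G}[u\leftrightarrow\infty]>0$ whenever $m^\star := \mu^+_{\mathbb G,\beta_c}[\sigma_u]>0$; and (c) the conditional decomposition $\field_\gamma=\sum_{\clust}\coin_{\clust}\mathbf{1}[\clust\text{ odd wrt }\gamma]$ as a sum of \iid\ Rademacher variables indexed by the clusters of $\set$. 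Combining (c) with Corollary~\ref{cor:4}, I first observe that, after passing to the infinite-volume limit and then letting the endpoints $a,b$ of $\gamma$ tend to infinity,
\[
\IE^\emptyset_{\drcur,\mathbb G}\bigl[\#\{\clust : \clust \text{ odd wrt } \gamma\}\bigr] \;=\; \IE^\emptyset_{\drcur,\mathbb G}[\field_\gamma^2] \;\longrightarrow\; +\infty.
\]

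Assume now for contradiction that $m^\star > 0$. By (b), a vertex of $\mathbb G$ is connected to infinity in the double random current with positive probability. Insertion tolerance (immediate from $x_e \in (0,1)$) together with shift-ergodicity of $\IP^\emptyset_{\drcur,\mathbb G}$ then yield, through a Burton--Keane argument, a $\IP^\emptyset_{\drcur,\mathbb G}$-a.s.\ unique infinite cluster $\clust_\infty$ with positive density. The set $\set_{\rm odd}$ is an even subgraph, so within each cluster $\clust$ it decomposes into disjoint cycles; $\clust$ is odd with respect to $\gamma$ iff an odd number of these cycles encloses exactly one of $a,b$. The infinite cluster contributes at most $1$ to the count. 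Every contributing finite cluster must contain a cycle encircling (say) $a$; by planarity, if this cycle lies at distance $r$ from $a$ then the disk $B_r(a)$ is entirely disjoint from $\clust_\infty$, because any $\set$-path from $B_r(a)$ to infinity must pass through the cycle, whose vertices lie in a finite cluster. The event $\{B_r(a)\cap\clust_\infty=\emptyset\}$ has probability decaying super-polynomially in $r$ by positive density and ergodicity of $\clust_\infty$, so
\[
\IE^\emptyset_{\drcur,\mathbb G}\bigl[\#\{\text{finite }\clust\text{ odd wrt }\gamma\}\bigr] \;\leq\; 2\sum_{r\geq 1}\IP^\emptyset_{\drcur,\mathbb G}[B_r(a)\cap\clust_\infty=\emptyset] \;<\; \infty
\]
uniformly in $b$. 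Together with the $+1$ coming from $\clust_\infty$, this contradicts the divergence above; hence $m^\star=0$.

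The main obstacle I anticipate is making the uniqueness of $\clust_\infty$ and the quantitative empty-disk estimate rigorous for $\IP^\emptyset_{\drcur,\mathbb G}$ on a general biperiodic planar graph: insertion tolerance and shift-ergodicity have to be verified for the non-product double-current measure (inherited from the product structure $\IP^\emptyset_{\rcur}\otimes\IP^\emptyset_{\rcur}$), and the density concentration for $\clust_\infty$ must be carried out using only the $\mathbb Z^2$-invariance of the model and the positivity of its density. Once these ingredients are in place, the topological mechanism --- a unique infinite cluster absorbs all but boundedly many separating contours --- converts the dimer-height-function divergence of Corollary~\ref{cor:4} into the vanishing of the spontaneous magnetization at criticality.
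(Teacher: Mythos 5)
You take a genuinely different route from the paper. Both arguments exploit the divergence $\IE^{\emptyset}_{\drcur}[\field_{\gamma}^2]\sim\tfrac1\pi\log|u-v|$ coming from Corollary~\ref{cor:4}, but the mechanism for the contradiction differs. The paper works throughout with the Ising sign-change functional $\mathbf C_u(\sigma)$: Step~1 shows $\mu^+[\mathbf C_u]=\infty$, Step~2 shows via the FKG inequality that if the spin-plus cluster percolated ($\mu^+[\mathbf C_u=0]=p>0$) then $\mathbf C_u$ would have a geometric tail (conditionally on having crossed $2k$ sign changes there is probability at least $p$ of escaping to infinity with no further changes), hence finite expectation --- contradiction --- and Step~3 extracts $\mu^+[\sigma_u]=0$ by one more FKG conditioning. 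You instead invoke an ADS-type equivalence $m^\star>0\Rightarrow\IP^{\emptyset}_{\drcur}[u\leftrightarrow\infty]>0$, Burton--Keane uniqueness of $\clust_\infty$ for the double-current measure, and a planar ``moat'' argument bounding the number of finite odd clusters by the distance from $u$ (resp.\ $v$) to $\clust_\infty$. The topological part is correct and nicely stated, but it is not the paper's route.

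The genuine gap is the quantitative claim that $\IP^{\emptyset}_{\drcur,\mathbb G}[B_r(a)\cap\clust_\infty=\emptyset]$ decays super-polynomially ``by positive density and ergodicity of $\clust_\infty$,'' and hence is summable in $r$. This does not follow: positive density and ergodicity of an invariant percolation only give that the empty-disk probability tends to $0$, with no rate whatsoever, and in particular do not imply $\IE[\mathrm{dist}(u,\clust_\infty)]<\infty$. Without a quantitative ingredient --- a finite-energy/FKG estimate giving uniformly positive conditional probability of hitting $\clust_\infty$ in $B_{r+c}$ given it was missed in $B_r$, or a domination by a model with exponential decay of dual connectivity --- the sum $\sum_r\IP[B_r(a)\cap\clust_\infty=\emptyset]$ may well diverge and the argument collapses. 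Note that this is exactly the role that the paper's Step~2 plays: the Ising FKG inequality supplies, for free, the geometric tail that your approach tries to extract from the geometry of the double-current infinite cluster, where no such monotonicity tools are readily available. As written, your proof does not go through; the decay estimate would need an independent proof of comparable difficulty to the result itself.
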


For the square lattice, the result goes back to the exact computation of Yang \cite{Yan52}. In higher dimension, the fact that $\mu_{\mathbb G,\beta_c}^+[\sigma_u]=0$ is known for the nearest neighbor Ising model on $\mathbb G=\mathbb Z^d$ \cite{AizDumSid15,AizFer86}. On trees, the result was proved in \cite{Hag96}. Recently, Raoufi \cite{Rao16} showed that amenable groups with exponential growth undergo a continuous phase transition.
To the best of our knowledge, a proof which is valid for any infinite biperiodic planar graph was not available until now. 

A byproduct of the proof is the following result about non-percolation of spins.
\begin{corollary}\label{cor:5}
Let $\mathbb G$ be a nondegenerate infinite biperiodic planar graph, then the $\mu_{\mathbb G,\beta_c}^+$-probability that there exists an infinite cluster of pluses or minuses is zero.
\end{corollary}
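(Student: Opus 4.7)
The plan is to combine the vanishing of the magnetization at criticality, given by Theorem~\ref{thm:continuity}, with the classical planar Ising dichotomy stating that infinite $+$ and $-$ clusters cannot coexist almost surely under a translation-invariant, ergodic, finite-energy Gibbs measure.

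First, I would establish that $\mu^+_{\mathbb G,\beta_c}=\mu^-_{\mathbb G,\beta_c}$. By Theorem~\ref{thm:continuity}, $\mu^+_{\mathbb G,\beta_c}[\sigma_u]=0$ for every vertex $u$; the global spin-flip symmetry of the Hamiltonian then gives $\mu^-_{\mathbb G,\beta_c}[\sigma_u]=0$ as well. Monotonicity in boundary conditions (Griffiths/FKG) produces a coupling $(\sigma,\sigma')\sim(\mu^+_{\mathbb G,\beta_c},\mu^-_{\mathbb G,\beta_c})$ with $\sigma\geq\sigma'$ pointwise; the vanishing of the expectation of $\sigma_u-\sigma'_u$ forces $\sigma_u=\sigma'_u$ almost surely at every $u$, hence $\mu^+_{\mathbb G,\beta_c}=\mu^-_{\mathbb G,\beta_c}$. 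Call this common measure $\mu$. As the unique translation-invariant Gibbs state at $\beta_c$, it is extremal, hence ergodic under the action of the period lattice of $\mathbb G$, and it is invariant under the global spin flip.

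Second, let $E^\pm$ denote the event that there exists an infinite $\pm$-cluster in $\mathbb G$. Both events are translation invariant, so ergodicity yields $\mu(E^\pm)\in\{0,1\}$, and the spin-flip invariance of $\mu$ yields $\mu(E^+)=\mu(E^-)$. Third, I would invoke the planar Ising ``no coexistence'' dichotomy (going back to Coniglio--Nappi--Peruggi--Russo and Higuchi, with the related Gandolfi--Keane--Russo framework for finite-energy translation-invariant site percolation): a translation-invariant, ergodic Gibbs measure with the finite-energy property on a planar graph cannot satisfy $\mu(E^+)=\mu(E^-)=1$. Combined with the equality from the previous paragraph, this forces $\mu(E^+)=\mu(E^-)=0$, which is the statement of the corollary.

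The main obstacle is the third step, namely the adaptation of the dichotomy from $\mathbb Z^2$ to a biperiodic planar graph. On the square lattice the proof proceeds via Zhang's argument: on a large square centered at the origin, one uses FKG, the square-root trick, and finite energy to show that with arbitrarily high probability each of its four sides is $+$-connected to infinity in the complement of the square, and simultaneously each side is $-$-connected to infinity; Burton--Keane uniqueness of the putative infinite $\pm$-clusters then yields a planarity contradiction between the four $+$-arcs and the four $-$-arcs. On a biperiodic planar graph one replaces the square by a rectangular fundamental region of the period lattice and verifies that the same FKG and finite-energy inputs let Zhang's contradiction go through; this adaptation is by now standard, but it is the one ingredient that genuinely uses planarity and is not provided by Theorem~\ref{thm:continuity} alone.
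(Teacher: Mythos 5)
Your proposal is correct in outline but takes a genuinely different route from the paper's, and in fact reverses the logical order. In the paper Corollary~\ref{cor:5} is not a consequence of Theorem~\ref{thm:continuity}: it is precisely Step~2 of the proof of Theorem~\ref{thm:continuity}, and Step~3 then \emph{deduces} $\mu^+_{\mathbb G^*}[\sigma_u]=0$ from it. Step~2 is self-contained: assuming $\mu^+_{\mathbb G^*}[\mathbf C_u(\sigma)=0]=p>0$, conditioning on the set of faces reachable from $u$ with at most $2k$ sign changes together with FKG yields $\mu^+_{\mathbb G^*}[\mathbf C_u\geq k+2\mid \mathbf C_u\geq k]\leq 1-p$, forcing geometric tails for $\mathbf C_u$ and contradicting the divergence $\mu^+_{\mathbb G^*}[\mathbf C_u]=+\infty$ obtained in Step~1 from the nesting-field/dimer connection and Corollary~\ref{cor:4}; since $\{\mathbf C_u(\sigma)=0\}$ is exactly the event that $u$ belongs to an infinite monochromatic cluster, summing over $u$ gives the corollary. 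Your route — pass through $\mu^+=\mu^-$, extract ergodicity, then invoke Burton--Keane uniqueness and a Zhang-type planarity contradiction — is logically consistent since Theorem~\ref{thm:continuity} is indeed proved in the paper, but it imports a substantial external input that the paper deliberately avoids. The step you flag as ``by now standard'' is the real issue: on a biperiodic planar graph lacking rotational or reflectional symmetry, the square-root trick forces only \emph{one} of the four sides of a large box to carry an infinite arm with high probability, not all four simultaneously, so the four-arm coexistence contradiction needs a genuinely separate argument (suitable quasi-transitive versions exist, but they are not a drop-in replacement for the $\mathbb Z^2$ statement). The paper's Step~2 sidesteps Zhang and Burton--Keane entirely, trading them for the quantitative estimate of Corollary~\ref{cor:4}, which is arguably the more economical path given that the nesting-field machinery has already been set up.
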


\subsection{Extension to Dobrushin boundary conditions}
Much of what has been described above can be extended to cover the case of the Ising model with Dobrushin boundary conditions. Consider  two vertices $a$ and $b$ on the exterior face of $G$. 
Configurations in $\mathcal{E}^{\{a,b\}}$ correspond to (the so-called Dobrushin) spin configurations where the external face is split into two faces of opposite spins by adding
an additional edge joining $a$ and $b$.
In particular, this construction implies that $\set \in \mathcal{E}^{\{a,b\}} $ necessarily contains a contour 
connecting $a$ and $b$.

The definition of the nesting field for a current with $\{a,b\}$-boundary conditions\ is almost the same with the exception that the variable $\coin_{\clust_0}$
corresponding to the cluster $\clust_0$ connecting $a$ and $b$ is set to $1$. Moreover, the cluster $\clust_0$ is called odd around $u$
if its contours assign spin $-1$ to $u$ in the model with Dobrushin boundary conditions with $+1$ spin on the (external) face adjacent to the clockwise
boundary arc from $a$ to $b$, and $-1$ spin on the face adjacent to the arc from $b$ to $a$.

Consider an augmented graph $\vec \graph_{(a,b)}$ where an additional edge $e_{(a,b)}$ 
directed from $b$ to $a$ is added in the external face of $\vec G$ in such a way that the clockwise boundary arc of $\vec G$ from $a$ to $b$ is bordering the unbounded face of $\vec \graph_{(a,b)}$. We define the graph $G^d_{(a,b)}$ out of $\vec G_{(a,b)}$ exactly as we defined $G^d$ out of $\vec G$.
Let $\matchs_{(a,b)}$ be the set of dimer covers of $G^d_{(a,b)}$ containing the edge $(b,a)$. Also, introduce the height function $h$ of $M$ in $\matchs_{(a,b)}$ by choosing a reference 1-form corresponding to a matching that represents a current composed only of a path of odd edges that form the clockwise arc from $b$ to $a$ on the boundary of $G$.

Then, we have the following extension of Theorems~\ref{thm:dimercurr} and \ref{thm:nestingfield}. 
\begin{theorem} \label{thm:dimercurr2}
For any finite simple planar graph $G$, 

(i) $\pi_* \IP_{\dimer}^{\{a,b\}}=\IP_{\drcur}^{\{a,b\}}$,

(ii) the law of $h$ under $ \IP_{\dimer}^{(a,b)}$ restricted to the faces of $G$ is the same as the law of the nesting field $\field$ under $\IP_{\drcur}^{\{a,b\}}$.

\end{theorem}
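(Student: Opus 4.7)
My plan is to mirror the proofs of Theorems~\ref{thm:dimercurr} and~\ref{thm:nestingfield}, treating the mandatory long edge $(b,a) \in M$ as a device that both installs the sources $\{a,b\}$ in the current and fixes the deterministic increment of the height across the cluster $\clust_0$ connecting $a$ and $b$.

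For part (i), I would first check that $\pi$ sends $\matchs_{(a,b)}$ into $\Omega^{\{a,b\}}$. At every vertex $z \notin \{a,b\}$ the cycle at $z$ in $G^d_{(a,b)}$ is identical to the one in $G^d$, so the parity argument of Theorem~\ref{thm:dimercurr} gives even degree of $z$ in $\omega_{\rm odd}$. At $a$ and $b$, however, the forced long edge $(b,a)$ contributes exactly one long dimer to each of their cycles, flipping the parity of the long-dimer count and hence of $\deg_{\omega_{\rm odd}}$ at these two vertices; therefore $\pi(M) \in \Omega^{\{a,b\}}$. For the measure-preserving statement, the weight $x_{(b,a)}$ of the forced edge is a common factor for every $M \in \matchs_{(a,b)}$ and is absorbed into the normalization, whereas the multi-to-one counting of short-edge patterns at each cycle reproduces the double random-current weight with sources $\{a,b\}$ exactly as in the sourceless case.

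For part (ii), let $M_0 \in \matchs_{(a,b)}$ be the reference matching defining $f_0$, so that $\pi(M_0)$ has odd edges exactly the clockwise boundary arc from $b$ to $a$. For any $M \in \matchs_{(a,b)}$ the symmetric difference $M \triangle M_0$ decomposes into disjoint simple loops in $G^d_{(a,b)}$ together with a single $a$-to-$b$ path; via $\pi$, the loops trace the contour boundaries of the non-distinguished clusters of $\omega = \pi(M)$, while the path closes up with the reference boundary arc to trace the contour of $\clust_0$. Each loop changes $h$ by $\pm 1$ when crossed by a dual path, with uniform sign independently among distinct clusters, whereas the loop associated with $\clust_0$ contributes a deterministic $+1$ matching the convention $\coin_{\clust_0} = 1$. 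Summing across any dual path from $u_0$ to $u$ then identifies $h_u$ with $\field_u$ cluster by cluster, as in Theorem~\ref{thm:nestingfield}.

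The main obstacle is the orientation bookkeeping: one must verify that the height jump across the $\clust_0$-loop equals precisely $+1$ (consistent with the Dobrushin convention placing spin $+1$ on the clockwise arc from $a$ to $b$), and that for the remaining clusters the two local sign choices are genuinely equiprobable and independent. Both facts should follow from the corresponding statements in the sourceless case applied component by component, but the alignment of conventions, together with the identification of the freedom of $\pi$ as the mechanism producing the independent $\pm 1$ signs, requires care.
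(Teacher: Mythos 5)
Your plan for part (i) is essentially the paper's: treat the forced edge $(b,a)$ as supplying the sources, note its weight is a constant absorbed into the normalization, and run the sourceless counting component-by-component. The paper does exactly this, uniformly in $B$, but it factors the counting through the intermediate step of alternating flows (Theorems~\ref{thm:firstmapping} and~\ref{thm:secondmapping}): $\pi=\theta\circ\eta$, with $\eta$ collapsing the freedom over isolated-vertex cycles and $\theta$ collapsing the freedom over cluster-boundary orientations and edge-multiplicities. Your phrase ``the multi-to-one counting of short-edge patterns at each cycle reproduces the double random-current weight'' is an assertion that in the paper requires the explicit weight computation $x_{e_{s1}}+x_{e_{s2}}+x_{e_{s1}}x_{e_m}x_{e_{s2}}=\tfrac{2x_e}{1-x_e^2}$ etc.\ for odd and even edges; without the flow intermediary you would have to carry out that enumeration directly on dimer covers, which is harder but not wrong.

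Part (ii) has a genuine error. You claim $M\triangle M_0$ decomposes into disjoint loops ``together with a single $a$-to-$b$ path.'' This is false: both $M$ and $M_0$ belong to $\matchs_{(a,b)}$, hence both contain the forced long edge $(b,a)$; that edge therefore cancels in the symmetric difference, and $M\triangle M_0$ is a disjoint union of cycles with \emph{no} path component (as is always the case for the symmetric difference of two perfect matchings of the same graph). Beyond that, the identification ``the loops trace the contour boundaries of the non-distinguished clusters of $\omega$'' is not established and is not true in general: symmetric-difference cycles in $G^d_{(a,b)}$ do not project under $\pi$ onto individual cluster boundaries of $\omega$, since a single cluster of $\omega$ can correspond to several loops (e.g.\ from the internal freedom at isolated vertices and at odd edges of the second type) or a single loop can span edges of $G^d$ not corresponding to contours at all. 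Finally, the claim that the sign contributions are uniform and independent across clusters is exactly the delicate point; the paper obtains it by showing that, conditionally on $\omega$, the two opposite orientations of each non-trivial cluster boundary carry equal weight in the flow measure, so that $\xi_{\clust}(F)$ are iid fair signs, with the forced edge $e_{(a,b)}$ pinning the orientation of $\clust_0$ and thereby enforcing $\coin_{\clust_0}=1$. Your symmetric-difference route does not recover this structure; you would need to redo it via the flow decomposition or supply an entirely new argument for the independence.
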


\bigbreak\noindent{\bf Acknowledgments}  M.L.\ is grateful to Sanjay Ramassamy for an inspiring remark, and the authors thank Aran Raoufi and Gourab Ray for many useful discussions. This article was finished during a stay of M.L. at IHES. M.L.\ thanks IHES for the hospitality.  The research of M.L.\ was funded by EPSRC grants EP/I03372X/1 and EP/L018896/1 and was conducted when M.L.\ was at the University of Cambridge.
The research of H.D.-C.\ was funded by a IDEX Chair from Paris Saclay, the ERC CriBLaM, and by the NCCR SwissMap from the Swiss NSF.

\section{Proofs of Theorems~\ref{thm:dimercurr}, \ref{thm:nestingfield} and \ref{thm:dimercurr2}}

There will be no difference in working with $B=\emptyset$ or $B=\{a,b\}$. For this reason, we simply refer to $B$ as being the set of sources. The proofs rely on the notion of \emph{alternating flows} and their \emph{height function}. For this reason, we define a probability measure on flows which will be later naturally related to the
double random current measure and its nesting field. We should mention that the proofs of the theorems can be obtained by hand, meaning without using alternating flows. Nonetheless, we believe that alternative flows offer an elegant way of deriving the connection between dimers and double random currents.

A \emph{sourceless alternating flow} $F$ is a set of directed edges of $\vec G$
such that for each vertex $v$, the edges in $F$ around $v$ alternate between being oriented towards and away from $v$ when going around $v$ (see Fig.~\ref{fig:flowcurrs}). In particular, the same number 
of edges enters and leaves~$v$. For two vertices $a$ and $b$ on the outer face of $\vec \graph$, an \emph{alternating flow with source $a$ and sink $b$} is a sourceless alternating flow on $\vec \graph_{(a,b)}$ containing $e_{(a,b)}$  (note that, here, $(a,b)$ is an oriented edge and should not be confused with $\{a,b\}$).
Denote the set of sourceless alternating flows on $\vec \graph$ by $\flows^\emptyset$, and the set of alternating flows with source $a$ and sink $b$
by $\flows^{(a,b)}$.

Define a probability measure on alternating flows with $B=\emptyset$ or $B=(a,b)$ by
\begin{align}
\label{eq:altflow}
\IP^{B}_{\aflow} (F) = \frac1{Z^{B}_{\aflow}} 2^{|V^c(F)|} \prod_{\vec e\in F} x_{\vec e}, \qquad \text{for all }F  \in \flows^B,
\end{align}
where $V^c(F)$ is the set of isolated vertices in the graph (V,$F$).

\begin{figure}
		\begin{center}
			\includegraphics[scale=0.9]{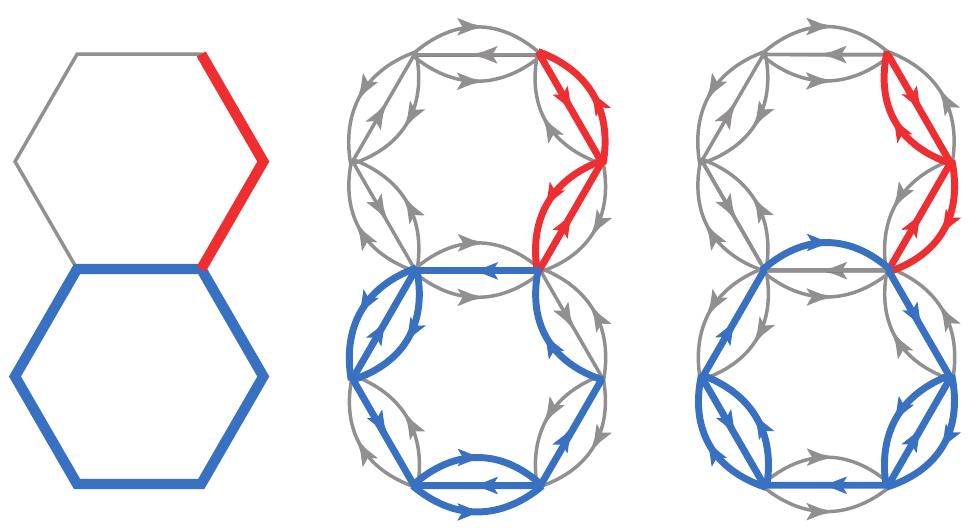}  
		\end{center}
		\caption{A double random current configuration and two corresponding alternating flows with opposite orientations of the outer boundary.}
	\label{fig:flowcurrs}
\end{figure}

Define a map $\theta : \flows^B \to \currs^B$ as follows.
For every $F \in \flows^B$ and every $e \in E$, consider the number of corresponding directed edges $\vec e_m$, $\vec e_{s1}$, $\vec e_{s2}$ present in $F$.
Let $\set_{\rm odd}\subset E$ be the set with one or three such present edges, and $\set_{\rm even} \subset E$ the set with exactly two such edges. Then, set $\theta(F)=\omega$.
It follows from the definition of alternating flows that $\set=\set_{\rm odd} \cup \set_{\rm even}$ is a current with sources $B$. 
Denote by $\theta_* \IP_{\aflow}^B$ the pushforward measure on $\sets^B$. The following result was previously obtained in \cite{LisT}. \begin{theorem}[\cite{LisT}] \label{thm:firstmapping}
For any finite simple planar graph $G$, we have that $\theta_* \IP_{\aflow}^B=\IP_{\drcur}^{B}$.
\end{theorem}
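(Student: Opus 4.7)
My plan is to verify the equality of measures pointwise: for each $\omega \in \Omega^B$, I show that
\[
A(\omega) := \sum_{F \in \theta^{-1}(\omega)} 2^{|V^c(F)|} \prod_{\vec{e} \in F} x_{\vec{e}}
\quad\text{and}\quad
B(\omega) := \sum_{\substack{(\omega',\omega'')\\ \omega'+\omega''=\omega}} w(\omega')\,w(\omega'')
\]
are proportional via an $\omega$-independent constant (necessarily $C = Z^B_{\aflow}/(Z^B_{\rcur} Z^\emptyset_{\rcur})$), where $w(\omega) := \prod_{e \in \omega_{\rm odd}} x_e \prod_{e \in \omega_{\rm even}} p_e \prod_{e \notin \omega}(1-p_e)$ is the unnormalized weight from \eqref{eq:rc}. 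Normalization then delivers the stated equality of probability measures.

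I would first simplify $B(\omega)$. Parameterize the decompositions $\omega = \omega' + \omega''$ by $S := \omega'_{\rm odd} \subseteq \omega$; this forces $\omega''_{\rm odd} = S \triangle \omega_{\rm odd}$, and the parity conditions on the two summands collapse to the single requirement $S \in \mathcal{E}^B$ (since $\omega_{\rm odd} \in \mathcal{E}^B$ is automatic from $\omega \in \Omega^B$). For each $e \in \omega$ that is not odd in either component, there remains an independent local choice (present in $\omega'$ only, in $\omega''$ only, or in both, all as even edges). Summing the edge contributions and using the identity $2p_e(1-p_e) + p_e^2 = 1 - (1-p_e)^2 = x_e^2$, every admissible $S$ contributes the same total weight, giving
\[
B(\omega) = \tilde W(\omega)\, N(\omega), \quad \tilde W(\omega) := \prod_{e \notin \omega}(1-x_e^2) \prod_{e \in \omega_{\rm odd}} x_e \prod_{e \in \omega_{\rm even}} x_e^2, \quad N(\omega) := \#\{S \subseteq \omega : S \in \mathcal{E}^B\}.
\]

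For $A(\omega)$, a vertex is isolated in $F$ if and only if it is isolated in $\theta(F) = \omega$, so $2^{|V^c(F)|} = 2^{|V^c(\omega)|}$ pulls out of the sum. The remaining sum $\sum_{F \in \theta^{-1}(\omega)} \prod_{\vec{e} \in F} x_{\vec{e}}$ I would evaluate by a vertex-local argument. For each edge $e$, the cardinality of $F \cap \{\vec{e}_m, \vec{e}_{s1}, \vec{e}_{s2}\}$ is forced by $\omega(e)$, while at each vertex $v$ the alternation condition constrains which combinations at incident edges are compatible. The plan is to construct, at each $v$, a weight-preserving bijection between alternating configurations of directed edges at $v$ (consistent with $\omega$) and pairs consisting of (i) a local parity assignment on the $\omega$-edges at $v$ that, when glued across vertices, ranges over subsets $S \in \mathcal{E}^B \cap 2^\omega$, and (ii) a pairing of in-edges with out-edges of the alternating subset at $v$. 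Globally, summing over $S$ produces $N(\omega)$; the middle/side weights, related by $x_{\vec{e}_m} = 2 x_e/(1-x_e^2)$ and $x_{\vec{e}_{s1}} = x_{\vec{e}_{s2}} = x_e$, assemble into $\tilde W(\omega)$; hence $A(\omega) = C \cdot \tilde W(\omega)\, N(\omega)$.

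The main obstacle is executing the vertex-local bijection. The alternation condition at $v$ couples the middle-vs-side choices among all edges of $G$ incident to $v$, and its combinatorics depends on the cyclic order of these edges around $v$ and on the orientations of the middle edges (captured by the quantity $r(v)$ used to construct $G^d$). Arranging the bijection so that the parity $S \cap \{\text{edges of }\omega\text{ at }v\}$ emerges naturally from the alternating pattern, while the $x_{\vec{e}_m}$ and $x_{\vec{e}_{si}}$ weights combine exactly into $\tilde W(\omega)$ and the $2^{|V^c(\omega)|}$ prefactor is absorbed correctly, is where I expect the argument to require the most care. Once the local identity is in hand, it glues across vertices and edges to yield $A(\omega) = C\, \tilde W(\omega)\, N(\omega)$, matching $B(\omega) = \tilde W(\omega)\, N(\omega)$ up to the universal constant $C$ and thereby completing the proof.
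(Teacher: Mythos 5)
Your reduction of $B(\omega)$ is correct and is essentially the same as the paper's: parameterizing by $S = \omega'_{\rm odd}$, summing the edge-local choices via $2p_e(1-p_e) + p_e^2 = x_e^2$ and $(1-p_e)^2 = 1 - x_e^2$, and recognizing that the count $N(\omega) = \#\{S \subseteq \omega : S \in \mathcal E^B\}$ equals $2^{|\omega|-|V|+k(\omega)}$ recovers the paper's formula \eqref{eq:doublecurrent}. (A small imprecision: for $e \in \omega_{\rm odd}$ with, say, $e \in S$, there is also a residual choice whether $e$ is absent or even in $\omega''$, contributing $p_e + (1-p_e) = 1$; you omit this, but the final $\tilde W$ is still right.)

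The genuine gap is in the evaluation of $A(\omega)$, which you leave as a plan. The ``vertex-local bijection'' framing is conceptually mismatched to the structure of $\theta^{-1}(\omega)$: the residual freedom in a flow, once $\omega$ is fixed, is not distributed vertex by vertex. It consists of (a) a global binary choice of orientation for the boundary of each non-trivial cluster of $\omega$, and (b) once that orientation is fixed, edge-local choices that are forced by the alternating condition propagating around each cluster --- every even edge must contain $\vec e_m$ plus the one side edge compatible with the orientation, and every odd edge is either of ``type 1'' (just $\vec e_m$) or ``type 2'' ($\vec e_{s1}$, $\vec e_{s2}$, or all three), with the type determined by the orientation. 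A purely vertex-local assignment cannot see the per-cluster degree of freedom (exactly $2$ choices per non-trivial cluster, independent of its size), and it is unclear how you would reconstruct the even-subgraph count $N(\omega)$ out of such local data while simultaneously respecting the alternating constraints. The paper instead sums the edge weights \emph{conditionally on} a choice of cluster orientations, obtaining $2^{|\omega|}\prod_{\omega_{\rm odd}} \frac{x_e}{1-x_e^2}\prod_{\omega_{\rm even}} \frac{x_e^2}{1-x_e^2}$ independently of the orientation (using the identity $x_{e_{s1}}+x_{e_{s2}}+x_{e_{s1}}x_{e_m}x_{e_{s2}} = x_{e_m}$ for type-2 odd edges), and then multiplies by $2^{k(\omega)-|V^c|}$ for the orientations; combined with the $2^{|V^c(F)|}$ prefactor this yields $2^{|\omega|+k(\omega)}$ and matches \eqref{eq:doublecurrent} directly. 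Until you either carry out the cluster-orientation parametrization or supply a concrete, globally consistent construction realizing your bijection, the proof is incomplete at precisely the step you flag as the main obstacle.
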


\begin{proof}Since the theorem is a special case of~\cite[Thm 4.1]{LisT}, we only outline the proof here for completeness.

Let $|\set|$ be the number of edges in $\set$ and $k(\set)$ be the total number of clusters of the graph $(V,\set)$ (note that isolated vertices count as a cluster).
Using that the number of even subgraphs of the graph $(V,\omega)$ is equal to $2^{|\omega|-|V|+k(\omega)}$, it can be checked that the double random current measure takes the following form (see \cite[Thm 3.2]{LisT} for a detailed proof): 
\begin{align} \label{eq:doublecurrent}
{\IP}^{B}_{ \drcur}(\set)= \frac1{{Z}^{B}_{ \drcur}} 2^{|\set|+k(\set)} \prod_{e\in \set_{\rm odd}} x_e  \prod_{e\in \set_{\rm even}} x^2_e 
   \prod_{e\in E\setminus\set}   (1-x^2_e).
\end{align}
Now, fix $\set$ and observe that the preimage $\theta^{-1}[\set]$ is simple to understand (see Fig.~\ref{fig:flowcurrs}). Once given the orientations of the boundaries of each one of the non-trivial (meaning not reduced to an isolated vertex) clusters in $F$, not much freedom remains for the edges. More precisely, the even edges of $F$ necessarily contain the edge $e_m$, and the second edge is determined by the alternating condition. An odd edge $e$ can be of two types: either $F$ contains only $e_m$, or it is of a second type, where it contains either $e_{s1}$ only, $e_{s2}$ only, or the three edges $e_{s1}$, $e_m$ and $e_{s2}$. Again, which type it is is determined by the alternating condition. 

Observe that the sum over all configurations in $\theta^{-1}(\set)$ with prescribed orientations of the boundaries of the non-trivial clusters is equal to 
\[
2^{|\set|}\prod_{e\in \set_{\rm odd}} \frac{x_e}{1-x_e^2}  \prod_{e\in \set_{\rm even}} \frac{x^2_e}{1-x_e^2}.
\]
Indeed, each even edge contributes the multiplicative weight $x_{e_m}x_{e_{si}}=2\tfrac{x_e^2}{1-x_e^2}$ (with $i$ equal to 1 or 2), 
each odd edge of the first type $x_{e_m}=\tfrac{2x_e}{1-x_e^2}$, each odd edge of the second type $x_{e_{s1}}+x_{e_{s2}}+x_{e_{s1}}x_{e_m}x_{e_{s2}}=\tfrac{2x_e}{1-x_e^2}$ 
(we take into account that there are three possibilities for the alternating flow at this edge). Finally, each edge not in $\omega$ does not contribute any multiplicative weight. 

The result follows from the fact that the outer boundary of each non-trivial cluster can be oriented in two possible ways, hence the weight $2^{k(\omega)-|V^c(F)|}$.
\end{proof}

We now describe a straightforward measure preserving mapping from the dimer model to alternating flows.
To each matching $\match \in \matchs^B$, we associate a flow $\eta(\match)\in \flows^B$ by replacing each long edge in $\match$ by the corresponding directed edge in $\vec G$.
One can see that this always produces an alternating flow. Indeed, assuming otherwise, there would be two consecutive edges in $\eta(\match)$ of the same orientation,
and therefore, the path of odd length connecting them in a cycle would have a dimer cover, which is a contradiction. 
Let $\eta_*\IP_{\dimer}^B$ be the pushforward measure on $\flows^B$ under the map $\eta$.

\begin{theorem} \label{thm:secondmapping}
For any finite simple planar graph $G$,  we have that $\eta_* \IP_{\dimer}^B=\IP_{\aflow}^{B}$.
\end{theorem}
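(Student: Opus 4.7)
The plan is to prove the identity fiber-by-fiber: for each alternating flow $F \in \flows^B$ I compute the total dimer weight of the preimage $\eta^{-1}(F)$ and show it equals $2^{|V^c(F)|}\prod_{\vec e\in F} x_{\vec e}$, up to a normalization independent of $F$. This matches \eqref{eq:altflow} and yields the claim.

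First I would observe that a dimer cover $M$ with $\eta(M)=F$ is forced to contain exactly the long edges of $G^d$ corresponding to the directed edges of $F$, and no other long edges. Since long edges carry the weights $x_{\vec e}$ and short edges carry weight $1$, every $M \in \eta^{-1}(F)$ contributes the same weight $\prod_{\vec e \in F} x_{\vec e}$ to $Z^B_\dimer\cdot \eta_*\IP^B_\dimer(F)$. The task therefore reduces to counting $|\eta^{-1}(F)|$, which factorizes over the vertices $v\in V$ as a product of the number of completions of the short-edge matching on the cycle $\mathcal C_v$ that replaces $v$ in $G^d$.

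The heart of the argument, and the main technical obstacle, is this local count. At $v$, by the construction of $G^d$ each long edge of $F$ at $v$ attaches to exactly one vertex of $\mathcal C_v$, with in-edges hitting black vertices and out-edges hitting white ones; moreover, the cyclic order of the attachment points on $\mathcal C_v$ matches the cyclic order of the corresponding edges of $\vec G$ around $v$. One has to check that no two long edges of $M$ share a cycle vertex: this could only happen for two consecutive $\vec G$-edges at $v$ of the same orientation, but then the alternating-flow condition forbids both from appearing in $F$. Removing the occupied vertices from $\mathcal C_v$ leaves a disjoint union of short-edge paths, each of which admits a (necessarily unique) perfect matching if and only if it has an even number of vertices. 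Since $\mathcal C_v$ is bipartite, this parity holds precisely when consecutive attachment points have opposite colors, which is equivalent to the alternating condition at $v$. Hence each non-isolated $v$ contributes exactly one completion.

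For an isolated $v\in V^c(F)$ no long edges are incident to $\mathcal C_v$, and the full even cycle admits exactly two matchings, contributing the factor of $2$. Multiplying over $v$ gives
$$
\sum_{M \in \eta^{-1}(F)} \prod_{e \in M} x_e \;=\; 2^{|V^c(F)|} \prod_{\vec e \in F} x_{\vec e},
$$
as desired; summing over $F$ also shows $Z^B_\dimer=Z^B_\aflow$, so the measures agree. The case $B=\{a,b\}$ goes through verbatim on $G^d_{(a,b)}$: every $M\in\matchs_{(a,b)}$ contains the long edge representing $e_{(a,b)}$, so $\eta(M)\in\flows^{(a,b)}$, and the local completion analysis at every vertex (including $a$ and $b$) is unchanged.
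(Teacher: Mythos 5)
Your proof is correct and follows the same approach as the paper's: compare the two weight formulas, note the long-edge weights match, and count dimer completions of the short-edge cycles at each vertex. The paper's version is considerably terser---it just asserts that the only freedom is in the cycles at isolated vertices---whereas you carefully verify the uniqueness of the completion at non-isolated vertices via the parity/coloring argument, which is a worthwhile fleshing-out of the same idea.
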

\begin{proof}
Comparing \eqref{eq:altflow} with \eqref{eq:dimer}, and knowing that the long edges of $G^d$ have the same weights as in $\vec G$, we only need to account for the factor 
$2^{|V^c(F)|}$ from the definition of the alternating flow measure.
To this end, note that the only freedom in the dimer covers in $\eta^{-1}(F)$ is the way they match the short edges in the cycles corresponding to the isolated vertices of $(V,F)$.
Each such cycle has two matchings, and the matchings of different cycles are independent. This completes the proof.
\end{proof}

\begin{proof}[Proof of Theorems~\ref{thm:dimercurr} and \ref{thm:dimercurr2} (i)]
We define $\pi =\theta \circ \eta : \matchs^B \to \currs^B$ to be the many-to-one map projecting dimer covers to currents (note that it is the mapping defined in the introduction). 
Let $\pi_*\IP_{\dimer}^B$ be the pushforward measure on $\currs^B$.
Combining the two previous theorems yields the corresponding statements of the introduction.
\end{proof}

We now turn to height functions. 
Let $h=h_F$ be the {\em height function} of a flow $F$ defined on the faces of $\vec G$ (or $\vec G_{(a,b)}$ if we consider $(a,b)$-boundary conditions) in the following way:
\begin{enumerate}[(i)]
\item $h(u_0)=0$ for the unbounded face $u_0$,
\item for every other face $u$, choose a path $\gamma$ connecting $u_0$ and $u$, and define $h(u)$ to be total flux of $F$ through $\gamma$, i.e., the number of edges in $F$ crossing $\gamma$ from left to right
minus the number of edges crossing $\gamma$ from right to left.
\end{enumerate}
The obtained value is independent of the choice of $\gamma$, since at each $v\in V$, the same number of edges of $h$ enters and leaves $v$ (and so the total flux of $F$ through any
closed path of faces is zero).

\begin{proof}[Proof of Theorem~\ref{thm:nestingfield} and \ref{thm:dimercurr2}(ii)] It is clear that $h_F$ is equal to the height function of the dimer cover $M=\eta(F)$. We therefore relate $h_F$ to $\mathcal S(\omega)$, where $\omega=\theta(F)$. 

Recall from the proof of Theorem~\ref{thm:firstmapping} that for each cluster of a double random current, there are two opposite orientations of the boundary of the corresponding connected component  
of the associated alternating flows in $\theta^{-1}(\omega)$. Set $\xi_{\mathcal C}(F)=+1$ if $F$ is oriented counterclockwise around the boundary of the cluster $\mathcal C$ of $\omega$, and $\xi_{\mathcal C}(F)=-1$ otherwise. By the proof of Theorem~\ref{thm:firstmapping}, $(\omega,\xi(F))$ is in direct correspondence with $F$. Furthermore, by construction,  $h_F$ is equal to the nesting field $\mathcal S(\omega)$ obtained from the $\xi(F)$. The fact that for each cluster $\mathcal C$, the two opposite orientations 
carry the same weight implies that under the law of alternating flows, conditionally on $\omega$, $\xi(F)$ is a iid family of random variables which are equal to $+1$ or $-1$ with probability $1/2$. This concludes the proof.
\end{proof}

\section{Proof of Theorem~\ref{thm:bozonization}}

The proof is based on classical properties of the double random current model combined with the properties of the mapping $\pi$.
First, observing that changing $J_e$ to $-J_e$ amounts to changing $x_e$ to $-x_e$, and not changing $p_e$, 
the classical representation of spin-spin correlations in terms of the random current gives
$$\mu_{G,\beta}^{\rm f}\Big[\prod_{i=1}^k\sigma_{x_i}\times \prod_{j=1}^n\mu_{\ell_j}\Big]=\frac{1}{Z^\emptyset_{\rm curr}}\sum_{\omega\in \Omega^{X}}w(\omega)(-1)^{|\omega_{\rm odd} \cap E_{\text{odd}}|},$$
where $X=\{x_1,\dots,x_k\}$, $E_{\text{odd}}$ is the set of edges crossed an odd number of times by $\cup_{j=1}^n\ell_j$, and where $w(\omega)$ is the weight associated with a current $\omega$ through~\eqref{eq:rc}. 
The switching lemma for double currents \cite{Aiz82,Dum16,Dum17a} implies that 
$$\mu_{G,\beta}^{\rm f}\Big[\prod_{i=1}^k\sigma_{x_i}\times \prod_{j=1}^n\mu_{\ell_j}\Big]^2=\IE_{\drcur}^{\emptyset}[(-1)^{|\omega_{\rm odd}\cap E_{\text{odd}}|}\mathbb  I_{\omega\in \mathcal F_X}],$$
where $\mathcal F_X$ is the event that every cluster of $\omega$ intersects $X$ an even number of times (the points in $X$ are in general counted with multiplicity).
To conclude the proof of the theorem, we therefore need to show that 
\begin{equation}\label{eq:o1}
\IE_{\drcur}^{\emptyset}[(-1)^{|\omega_{\rm odd}\cap E_{\text{odd}}|}\mathbb  I_{\omega\in \mathcal F_X}]=\IE_{\dimer}^\emptyset\Big[\prod_{i=1}^k\sin(\pi h_{x_i})\times\prod_{j=1}^n\cos(\pi h_{u_j})\Big].
\end{equation}
In order to see this, fix a current $\omega$ and denote its nesting field by $\mathcal S$. Observe first that for every dimer configuration $M\in\pi^{-1}(\omega)$, $h_u=\mathcal S_u$ on every face $u$ of the graph and therefore, since all disorder lines start on the unbounded face,
$$\prod_{j=1}^n\cos(\pi h_{u_j})=\prod_{j=1}^n\cos(\pi \mathcal S_{u_j})=(-1)^{|\omega_{\rm odd} \cap E_{\text{odd}}|}.$$
(In particular it does not depend on $M$.)

Let us now treat the case of the product of sines in \eqref{eq:o1}. The definition of the reference 1-form~$f_0$ together with the structure of the graph $G^d$ imply that $h$ is constant on the vertices of every cluster of $\omega$. In particular, if $|X\cap\mathcal C|$ is even for every cluster $\mathcal C$ of $\omega$, the product of sines is equal to 1. To treat the case where $|X\cap\mathcal C|$  is odd for some $\mathcal C$, observe that while the height function of $M$ is not determined by $\omega=\pi(M)$, it is determined by the alternating flow $F=\eta(M)$, except on isolated vertices, where it is obtained by adding $\pm\tfrac12$ to the height function at neighboring faces, independently for each isolated vertex. Since the orientations of the clusters of $F$ are chosen uniformly at random in the coupling introduced in the previous section (they are given by the $\xi_{\mathcal C}$), we conclude that 
$$\IE_{\dimer}^\emptyset\Big[\,\prod_{i=1}^k\sin(\pi h_{x_i})\,\Big|\,\pi(M)=\omega\,\Big]=\mathbb  I_{\omega\in\mathcal F_X}.$$
By combining the two displayed equations with Theorems~\ref{thm:dimercurr} and~\ref{thm:nestingfield}, we deduce \eqref{eq:o1}.

\begin{remark}\label{rmk:1} The relations obtained in Theorem~\ref{thm:bozonization} are the same as in Lemma~3 of \cite{Dub}. Indeed, 
the dimer model on $G^d$ is associated with the dimer model of \cite{Dub,BoudeT} as follows. Given an edge of $G$, 
select a quadrilateral face in $G^d$ corresponding to the edge and (if necessary) split each vertex that the chosen quadrilateral shares
with a quadrilateral corresponding to a different edge of $G$. In this way we find ourselves in the situation from the upper left panel in Fig.~\ref{fig:urbanrenewal}.
After performing urban renewal (i.e.~the transformation from Fig.~\ref{fig:urbanrenewal}) 
and collapsing the doubled edge, we are left with one quadrilateral as desired. One can check that the weights that we obtain
match those from Fig.~\ref{fig:graphs}. We then repeat the procedure for every edge of $G$ and the resulting graph is $C_G$.

Note that the height function on faces is not modified by vertex splitting and urban renewal. Nonetheless, there is indeed loss of information between the dimer model on $G^d$ and the one on $C_G$, and we the former is more suitable for understanding double random currents.
\end{remark}

\begin{figure}
		\begin{center}
			\includegraphics[scale=1]{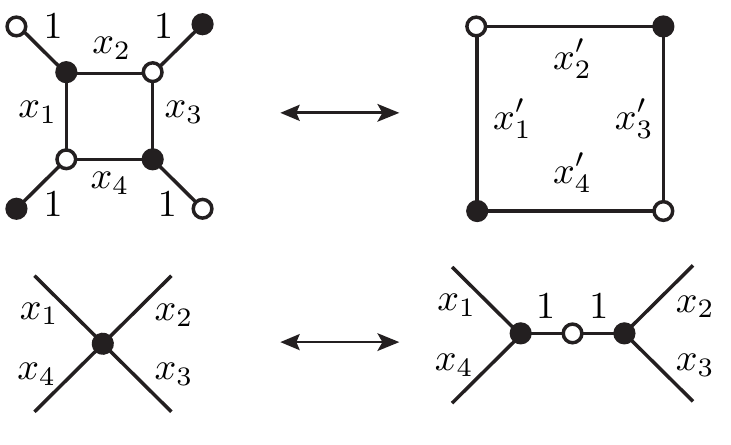}  
		\end{center}
		\caption{Urban renewal and vertex splitting  are transformations of weighted graphs preserving the distribution of dimers and the height function outside the modified region. 
		The weights in urban renewal satisfy $x_1'=\tfrac{x_3}{x_1x_3+x_2x_4}$, 
		$x_2'=\tfrac{x_4}{x_1x_3+x_2x_4}$, $x_3'=\tfrac{x_1}{x_1x_3+x_2x_4}$, $x_4'=\tfrac{x_2}{x_1x_3+x_2x_4}$. }
	\label{fig:urbanrenewal}
\end{figure}

\section{Proof of Theorem~\ref{thm:continuity}}

We will in fact work with the Ising model on the dual graph $\mathbb G^*$ obtained by putting a vertex in each face of $\mathbb G$, and edges between vertices corresponding to neighboring faces. As such, the Ising model below will be seen as a random assignment of spins to the faces of $\mathbb G$. While we use the notation $\mathbb G$ as in the introduction, the outcome of the proof will be Theorem~\ref{thm:continuity} for $\mathbb G^*$. Since the dual graph of a nondegenerated biperiodic graph is itself non-degenerated and biperiodic, this is sufficient. The reason for working with the Ising model on $\mathbb G^*$ is that we will use the connection with the dimer on $\mathbb G$, and that this makes the study more coherent with other sections of the article.

Below, we restrict our attention to the Ising model on $\mathbb G^*$ at $\beta=\beta_c(\mathbb G^*)$ and drop $\beta$ from the notation. Let $\mu^{+}_{\mathbb G^*}$ (resp.\ $\mu^{-}_{\mathbb G^*}$) be the infinite volume Ising measure on $\mathbb G^*$ with~$+$ (resp.\ $-$) boundary conditions, and 
for a face $u$ of $\mathbb G$, let $\mathbf C_u(\sigma)$ be the minimum number of spin changes in $\sigma$ over infinite self-avoiding paths starting from $u$. The architecture of the proof is the following:
\begin{description}
\item[Step 0] We introduce relevant auxiliary infinite volume measures.
\item[Step 1] We show that $\mu^+_{\mathbb G^*}[\mathbf C_u(\sigma)]=+\infty$.
\item[Step 2] We prove that $\mu^+_{\mathbb G^*}[\mathbf C_u(\sigma)=0]=0$.
\item[Step 3] We deduce that $\mu^+_{\mathbb G^*}[\sigma_u]=0$.
\end{description}
\begin{remark}Note that Step 2 can be restated as follows: there is no infinite cluster of pluses or minuses $\mu^+_{\mathbb G^*}$-almost surely. As a byproduct, we obtain Corollary~\ref{cor:5}.
\end{remark} 
Step 1 is the major novelty of the proof. It relies on Theorem~\ref{thm:nestingfield} and Corollary~\ref{cor:4}. Step 3 is directly extracted from \cite[Prop.~4.1]{Dum17a}. We refer to \cite{FriVel17} for classical facts on the Ising model.
\medbreak
Let $\Lambda\approx \mathbb Z\oplus\mathbb Z$ be a group acting transitively on $\mathbb G$. Let $\mathbb G_n=\mathbb G/(n\mathbb Z\oplus n\mathbb Z)$ be the toroidal graph 
of size $n\in\mathbb N$, and let $\mathbb G^d_n = \mathbb G^d/(n\mathbb Z\oplus n\mathbb Z)$ be the bipartite toroidal dimer graph corresponding to $\mathbb G_n$. 
 Below, we consider the random current, double random current and dimer models on $\mathbb G_n$ and $\mathbb G^d_n$ with $n$ tending to infinity, and where the weights $x_e$ on 
$\mathbb G_n$ are defined as follows: if $e$ is the edge between the faces $u$ and $v$, then 
$x_e:=\exp[-2\beta(\mathbb G^*) J_{\{u,v\}}]$. In what follows, we add subscripts to the already introduced notation to 
mark the dependency of the probability measures on the underlying graph.
\bigbreak\noindent{\bf Step 0.} 
Note that for topological reasons, some current configurations on $\mathbb G_n$ do not correspond to spin configurations on the faces of $\mathbb G_n$. To overcome this obstacle, we will resort to the construction 
of infinite volume measures for the different models, where planarity is recovered in the limit as $n$ tends to infinity. There are several ways to proceed and we simply explain here the shortest one (this is not the most self-contained one).

By \cite{KenOkoShe06}, $\IP^{\emptyset}_{\dimer,\mathbb G^d_n}$ converges weakly to a $\Lambda$-invariant measure 
$\IP^{\emptyset}_{\dimer,\mathbb G^d}$ on dimer covers of $\mathbb G^d$. 
Since the sourceless double random current on $\mathbb G_n$ is a local function of the dimer model on $\mathbb G^d_n$, we get that $\IP^{\emptyset}_{\drcur,\mathbb G_n}$ converges
weakly to an infinite volume measure  $\IP^{\emptyset}_{\drcur,\mathbb G}$ on sourceless currents on $\mathbb G$. 

The measures $\IP^{\emptyset}_{\rcur,\mathbb G_n}$ also converge weakly to a measure
$\IP^{\emptyset}_{\rcur,\mathbb G}$ on sourceless currents on~$\mathbb G$. In order to see this, we go back to the original definition of single and double currents in terms of integer-valued functions. Since the integer value of the double current at an edge is obtained from the parity independently for any edge, the integer-valued double random current also converges. With this definition, the integer-valued double random current is simply the sum of two iid integer-valued single random currents, and therefore for any finite set $D$ of edges, the characteristic function of the latter when restricted to $D$ is the square-root of the characteristic function of the former. In particular, it converges for any fixed $D$. This implies the convergence of the single random current.

We now define a probability measure $\mu_{\mathbb G^*}$ on the space of $\pm 1$ spin configurations on the faces of $\mathbb G$ by tossing a symmetric coin to decide the spin at a fixed face, and then 
using the odd part of a current $\omega$ drawn according to $\IP^{\emptyset}_{\rcur,\mathbb G}$ to define the interfaces between $+1$ and $-1$ spins. 
This is well defined since $\mathbb G$ is planar and the degree of $\omega_{\rm odd}$
at every vertex of $\mathbb G$ is even almost surely. Note that $\mu_{\mathbb G^*}$ is $\Lambda$-invariant since the infinite-volume version of the single random currents inherits the invariance under the action of $\Lambda$ from the dimer measure. 

Using
the domain Markov property of $\omega_{\rm odd}$ under $\IP^{\emptyset}_{\rcur,\mathbb G_n}$, and the
fact that a spin configuration under $\mu_{\mathbb G^*}$ carries the same information (up to a spin flip) as $\omega_{\rm odd}$, 
one can check that $\mu_{\mathbb G^*}$ satisfies the Dobrushin--Lanford--Ruelle conditions for an infinite volume
Gibbs state of the Ising model with parameters $\beta$ and $(J_{e})_{e\in E}$. 

A result of Raoufi~\cite{Rao17} classifying $\Lambda$-invariant Gibbs measures for the Ising model, and the $\pm1$ symmetry of $\mu_{\mathbb G^*}$ readily yield 
\begin{align} \label{eq:concom}
\mu_{\mathbb G^*} = \tfrac12( \mu_{\mathbb G^*}^+ +\mu_{\mathbb G^*}^-).
\end{align} 
(Note that the result in~\cite{Rao17} is stated for vertex transitive graphs, and it can be generalized to the quasi-transitive case which includes biperiodic graphs).

\bigbreak\noindent{\bf Step 1.} 
Fix two faces $u$ and $v$ of $\mathbb G_n$, and a self-avoiding path $\gamma$ connecting $u$ and $v$. 
Recall that a cluster $\clust$ of $\omega$ is {odd with respect to} $\gamma$ if $\clust \cap \omega_{\rm odd}$ crosses an odd number of edges of $\gamma$.
For $k=1,2,\ldots,\infty$, let $\mathbf N^k_{\gamma}(\set)$ be the number of 
clusters of $\omega$ that are odd with respect to $\gamma$ in the current configuration obtained by restricting $\omega$ to the set of edges at distance at most $k$ from~$\gamma$.
The quantities $\mathbf N^k_{\gamma}$ are subadditive, i.e.
\begin{align} \label{eq:subadd}
\mathbf N^k_{\gamma}(\set+\set')\leq \mathbf N^k_{\gamma}(\set) +\mathbf N^k_{\gamma}(\set').
\end{align}
We give a proof of this inequality that is independent of $k$ so we may assume that $k=\infty$. Indeed, note that if $\clust$ is a cluster of $\set+\set'$,
then the parity of the number of edges in $\clust \cap (\set+\set')_{\rm odd}$ crossing $\gamma$ is equal to the parity of the sum of the numbers of edges in 
$\clust \cap \set_{\rm odd}$ and $\clust \cap \set'_{\rm odd}$ crossing $\gamma$. Hence, if the former number is odd, exactly one of the latter numbers is odd, which means
that either $\omega$ or $\omega'$ contain at least one cluster that is odd with respect to $\gamma$, and \eqref{eq:subadd} is proved.

Note moreover that $\mathbf  N^k_{\gamma}$ is decreasing in $k$ since adding connections to the current cannot result in a larger number of odd clusters. By \eqref{eq:subadd}
and Remark~\ref{rem:torusfield}, we can therefore write for all $k$ and $n$,
\begin{align} \label{eq:bound1}
\IE^{\emptyset}_{\rcur,\mathbb G_n} [ N_{\gamma}^k] \geq \tfrac12\IE^{\emptyset}_{\drcur,\mathbb G_n}  [ N_{\gamma}^k] \geq \tfrac12\IE^{\emptyset}_{\drcur,\mathbb G_n}  [ N_{\gamma}^{\infty}]
=\tfrac12\IE^{\emptyset}_{\drcur,\mathbb G_n}  [ \mathcal{S}_{\gamma}^2] = \tfrac12\IE^{\emptyset}_{\dimer,\mathbb G^d_n}  [ h_{\gamma}^2],
\end{align}
where $\mathcal S_{\gamma}$ and $h_{\gamma}$ are the increments along $\gamma$ of the nesting field and the dimer height function respectively. 
The first equality follows from the fact that conditionally on $\omega$, $\field_{\gamma}$ is the sum of $\mathbf N_{\gamma}(\omega)$ iid centered random variables of variance 1.
As both $N_{\gamma}^k$ and $h_{\gamma}$ are local functions, taking first the weak limit in $n$ and then the decreasing limit in $k$ on both sides of \eqref{eq:bound1}, we get
\begin{align}\label{eq:bound2}
\IE^{\emptyset}_{\rcur,\mathbb G} [ N_{\gamma}^{\infty}] \geq \tfrac12\IE^{\emptyset}_{\dimer,\mathbb G^d}  [ h_{\gamma}^2]=\tfrac{1}{2\pi}\log[|\phi(u)-\phi(v)|] +o(\log[|\phi(u)-\phi(v)|]),
\end{align}
where in the equality, we used Corollary~\ref{cor:4} together with the fact that at $\beta=\beta_c$, the relation between dimers on $C_{\mathbb G}$ and $\mathbb G^d$ implies by \cite{CimDum13} that the characteristic polynomial of the dimer model on $\mathbb G^d$ has a real zero on the torus $\mathbb T^2$. (Recall that $\phi$ is a linear transformation.)

For a spin configuration $\sigma$ on the faces of $\mathbb G$, let $\mathbf C_{uv}(\sigma)$ be the minimal number of sign changes in $\sigma$ along all self-avoiding paths from $u$ to $v$.
It follows that every such path $\gamma$ should contain at least one spin change per odd cluster, and therefore $\mathbf C_{uv}(\sigma)\ge \mathbf N^{\infty}_{\gamma}(\set)$, where $\sigma$ and $\omega$ are related by the low-temperature expansion (hence the choice of $x_e$ at the beginning of the proof). We deduce that 
\begin{equation}
\label{eq:kk}\mu_{\mathbb G^*}[\mathbf C_{uv}(\sigma)]\ge \IE^{\emptyset}_{\rcur,\mathbb G} [\mathbf N^{\infty}_{\gamma}(\omega)],
\end{equation}
and together with \eqref{eq:bound2} this gives us
\[\mu_{\mathbb G^*}[\mathbf C_u(\sigma)]+\mu_{\mathbb G^*}[\mathbf C_v(\sigma)]\ge \mu_{\mathbb G^*}[\mathbf C_{uv}(\sigma)]\ge\tfrac{1}{2\pi}\log[|\phi(u)-\phi(v)|] +o(\log[|\phi(u)-\phi(v)|]).
\]
Letting $|u-v|$ tend to infinity, and using the invariance of $\mu_{\mathbb G^*}$ under the action of $\Lambda$, we find that 
$\mu_{\mathbb G^*}[\mathbf C_u(\sigma)]=+\infty$ for every face $u$.
To complete this step, it only remains to transfer this estimate to $\mu_{\mathbb G^*}^+$ instead of $\mu_{\mathbb G^*}$.
But since $\mathbf C_u(\sigma)=\mathbf C_u(-\sigma)$, by \eqref{eq:concom} 
we deduce that $\mu_{\mathbb G^*}^+[\mathbf C_u(\sigma)]=\mu_{\mathbb G^*}^-[\mathbf C_u(\sigma)]=\mu_{\mathbb G^*}[\mathbf C_u(\sigma)]$.
\bigbreak
Below, we will use the following notation. For a set of faces $F$, $\partial F$ denotes the set of faces $u$ such that there exists a neighboring face $v$ which is not in $F$.
\bigbreak\noindent{\bf Step 2.} 
We proceed by contradiction. Assume that
$\mu^+_{\mathbb G^*}[\mathbf C_u(\sigma)=0]=p>0.$
We wish to prove that for every $k\ge0$,
\begin{equation}\label{eq:a}\mu^+_{\mathbb G^*}[\mathbf C_u(\sigma)\ge k+2\ |\ \mathbf C_u(\sigma)\ge k]\le 1-p.\end{equation}
This immediately implies that $\mu^+_{\mathbb G^*}[\mathbf C_u(\sigma)]<\infty$, which contradicts the first step.

Note that it suffices to show that for every $k\ge0$,
\begin{align}\label{eq:a}&\mu^+_{\mathbb G^*}[\mathbf C_u(\sigma)\ge 2k+1\ |\ \mathbf C_u(\sigma)\ge 2k\text{ and }\sigma_u=+]\le 1-p,\\
&\mu^+_{\mathbb G^*}[\mathbf C_u(\sigma)\geq 2k+2\ |\ \mathbf C_u(\sigma)\ge2k+1\text{ and }\sigma_u=-]\le 1-p.\nonumber\end{align}
We prove the first inequality, the second follows similarly. 

For $k=0$, the result is a direct consequence of the definition of $p$. For $k\ge1$, let $\mathcal F$ be the set of faces $v$ of $\mathbb G$ for which every path from $u$ to $v$ contains at least $2k$ changes of signs. Fix a set of faces $F$. For $\sigma\in A_F:=\Sigma_{\mathbb G^*}\cap\{\mathcal F=F\}\cap\{\mathbf C_u(\sigma)\ge 2k\}\cap\{\sigma_u=+1\}$, faces on $\partial F$ have spin $+1$. Therefore, we deduce that 
\begin{align*}\mu^+_{\mathbb G^*}[\mathbf C_u(\sigma)\ge 2k+1|A_F]& \leq 1-\mu^+_{\mathbb G^*}[S_F\,|\,\sigma_v=+1\text{ for all }v\in \partial F],\end{align*}
where $S_F$ denotes the event that there is an infinite self-avoiding path of pluses starting from $\partial F$. Note that $\mathbf C_u(\sigma)=0$ is included in $S_F$. The FKG inequality for the Ising model implies immediately that 
$$\mu^+_{\mathbb G^*}[S_F|\sigma_v=+1\text{ for all } v\in \partial F]\ge \mu^+_{\mathbb G^*}[S_F]\ge \mu^+_{\mathbb G^*}[\mathbf C_u(\sigma)=0]=p,$$
so that
\begin{align*}\mu^+_{\mathbb G^*}[\mathbf C_u(\sigma)\ge 2k+1|A_F]& \le 1-p.\end{align*}
Since the events $A_F$ partition $\Sigma_{\mathbb G^*}\cap\{\mathbf C_u(\sigma)\ge 2k\}\cap\{\sigma_u=+1\}$, summing on all possible $F$ gives \eqref{eq:a}.

\bigbreak\noindent{\bf Step 3.} It suffices to show that $\mu^+_{\mathbb G^*}[\sigma_u]\le 0$ since we already know by the first Griffiths inequality that $\mu^+_{\mathbb G^*}[\sigma_u]\ge0$. 

Fix a finite subgraph $H$ of $\mathbb G^*$ and note that 
\begin{equation}\label{eq:h}\mu^+_H[\sigma_u]\le \mu^+_H[\mathbf C_u(\sigma)=0]+\mu^+_H[\sigma_u\mathbf{1}_{\mathbf C_u(\sigma)\ge1}].\end{equation}
Now, condition on the set $\mathcal F$ of faces of $\mathbb G$ which are not connected by a path of pluses to the exterior of $H$.  By definition, conditioned on $\mathcal F=F$, the configuration outside $F$ is made of pluses, and the configuration inside of $F$ is an Ising model conditioned on faces of $\partial F$ to have spin $-1$. Furthermore,  $\mathbf C_u(\sigma)\ge1$ implies that $u\in F$. Thus, the Gibbs property implies that \begin{align}\label{eq:hh}\mu^+_H[\sigma_u\mathbf{1}_{\mathbf C_u(\sigma)\ge1}]&=\sum_{F\ni u}\mu^+_H[\sigma_u|\sigma_v=-1,\forall v\in \partial F]\times\mu^+_H[\mathcal F=F\text{ and }\mathbf C_u(\sigma)\ge1]\le0,\end{align}
where the inequality follows from the fact that the FKG inequality implies that 
$$\mu^+_H[\sigma_u|\sigma_v=-1,\forall v\in \partial H]\le 0.$$
Plugging this in \eqref{eq:h} gives that $\mu^+_H[\sigma_u]\le \mu^+_H[\mathbf C_u(\sigma)=0]$. Step 2 implies that  $\mu^+_{\mathbb G^*}[\sigma_u]\le 0$ by letting $H$ tend to $\mathbb G^*$.

\bibliographystyle{amsplain}

\begin{bibdiv}
\begin{biblist}

\bib{Aiz82}{article}{
      author={Aizenman, M.},
       title={Geometric analysis of {$\varphi ^{4}$} fields and {I}sing models.
  {I}, {II}},
        date={1982},
        ISSN={0010-3616},
     journal={Comm. Math. Phys.},
      volume={86},
      number={1},
       pages={1\ndash 48},
         url={http://projecteuclid.org/getRecord?id=euclid.cmp/1103921614},
}

\bib{AizBarFer87}{article}{
      author={Aizenman, M.},
      author={Barsky, D.~J.},
      author={Fern{{\'a}}ndez, R.},
       title={The phase \mbox{transition} in a general class of {I}sing-type
  models is sharp},
        date={1987},
        ISSN={0022-4715},
     journal={J. Statist. Phys.},
      volume={47},
      number={3-4},
       pages={343\ndash 374},
         url={http://dx.doi.org/10.1007/BF01007515},
}

\bib{AizDumSid15}{article}{
      author={Aizenman, M.},
      author={{Duminil-Copin}, H.},
      author={Sidoravicius, V.},
       title={Random {C}urrents and {C}ontinuity of {I}sing {M}odel's
  {S}pontaneous {M}agnetization},
        date={2015},
     journal={Communications in Mathematical Physics},
      volume={334},
       pages={719\ndash 742},
}

\bib{AizFer86}{article}{
      author={Aizenman, M.},
      author={Fern{{\'a}}ndez, R.},
       title={On the critical behavior of the magnetization in high-dimensional
  {I}sing models},
        date={1986},
        ISSN={0022-4715},
     journal={J. Statist. Phys.},
      volume={44},
      number={3-4},
       pages={393\ndash 454},
         url={http://dx.doi.org/10.1007/BF01011304},
}

\bib{BoudeT}{article}{
      author={Boutillier, C\'edric},
      author={de~Tili\`ere, B\'eatrice},
       title={Height representation of {XOR}-{I}sing loops via bipartite
  dimers},
        date={2014},
        ISSN={1083-6489},
     journal={Electron. J. Probab.},
      volume={19},
       pages={no. 80, 33},
         url={http://dx.doi.org/10.1214/EJP.v19-2449},
}

\bib{CimDum13}{article}{
      author={Cimasoni, David},
      author={Duminil-Copin, H.},
       title={The critical temperature for the {I}sing model on planar doubly
  periodic graphs},
        date={2013},
     journal={Electron. J. Probab},
      volume={18},
      number={44},
       pages={1\ndash 18},
}

\bib{Dub}{unpublished}{
      author={Dub\'{e}dat, Julien},
       title={{Exact bosonization of the Ising model}},
        date={2011},
        note={arXiv:1112.4399},
}

\bib{Dum17a}{unpublished}{
      author={Duminil-Copin, H.},
       title={Lectures on the {I}sing and {P}otts models on the hypercubic
  lattice},
        note={arXiv:1707.00520},
}

\bib{Dum16}{unpublished}{
      author={{Duminil-Copin}, H.},
       title={Random currents expansion of the {I}sing model},
        date={2016},
        note={arXiv:1607:06933},
}

\bib{DumTas15}{article}{
      author={{Duminil-Copin}, H.},
      author={Tassion, V.},
       title={A new proof of the sharpness of the phase transition for
  {B}ernoulli percolation and the {I}sing model},
        date={2016},
     journal={Communications in {M}athematical {P}hysics},
      volume={343},
      number={2},
       pages={725\ndash 745},
}

\bib{Fis66}{article}{
      author={Fisher, M.},
       title={On the dimer solution of planar {I}sing models},
        date={1966},
     journal={Journal of Math. Physics},
      volume={7},
      number={10},
       pages={1776\ndash 1781},
}

\bib{FriVel17}{book}{
      author={Friedli, S.},
      author={Velenik, Y.},
       title={Statistical mechanics of lattice systems: a concrete mathematical
  introduction},
   publisher={Cambridge University Press},
        date={2017},
}

\bib{GHS}{article}{
      author={Griffiths, R.~B.},
      author={Hurst, C.~A.},
      author={Sherman, S.},
       title={{Concavity of Magnetization of an Ising Ferromagnet in a Positive
  External Field}},
        date={1970},
     journal={Journal of Mathematical Physics},
      volume={11},
      number={3},
       pages={790\ndash 795},
  url={http://scitation.aip.org/content/aip/journal/jmp/11/3/10.1063/1.1665211},
}

\bib{Hag96}{article}{
      author={H\"aggstr\"om, O.},
       title={The random-cluster model on a homogeneous tree},
        date={1996},
     journal={Probability Theory and Related Fields},
      volume={104},
      number={2},
       pages={231\ndash 253},
}

\bib{Ken14}{article}{
      author={Kenyon, Richard},
       title={{Conformal Invariance of Loops in the Double-Dimer Model}},
        date={2014Mar},
        ISSN={1432-0916},
     journal={Communications in Mathematical Physics},
      volume={326},
      number={2},
       pages={477\ndash 497},
         url={http://dx.doi.org/10.1007/s00220-013-1881-0},
}

\bib{KenOkoShe06}{article}{
      author={Kenyon, Richard},
      author={Okounkov, Andrei},
      author={Sheffield, Scott},
       title={Dimers and amoebae},
        date={2006},
        ISSN={0003-486X},
     journal={Ann. of Math. (2)},
      volume={163},
      number={3},
       pages={1019\ndash 1056},
         url={http://dx.doi.org/10.4007/annals.2006.163.1019},
}

\bib{KraWan1}{article}{
	author = {Kramers, H. A.},
	author = {Wannier, G. H.},
	journal = {Phys. Rev. (2)},
	pages = {252--262},
	title = {Statistics of the two-dimensional ferromagnet.~{I}},
	volume = {60},
	year = {1941},
	}

\bib{Lam}{article}{
      author={Lam, Thomas},
       title={Dimers, webs, and positroids},
        date={2015},
        ISSN={0024-6107},
     journal={J. Lond. Math. Soc. (2)},
      volume={92},
      number={3},
       pages={633\ndash 656},
         url={https://doi.org/10.1112/jlms/jdv039},
}

\bib{LamNotes}{unpublished}{
      author={{Lam}, Thomas},
       title={{Totally nonnegative Grassmannian and Grassmann polytopes}},
        date={2015},
        note={arXiv:1506.00603},
}

\bib{LisT}{article}{
      author={Lis, Marcin},
       title={The planar {I}sing model and total positivity},
        date={2017},
        ISSN={0022-4715},
     journal={J. Stat. Phys.},
      volume={166},
      number={1},
       pages={72\ndash 89},
         url={https://doi.org/10.1007/s10955-016-1690-x},
}

\bib{LupWer}{article}{
      author={Lupu, T.},
      author={Werner, W.},
       title={{A note on Ising random currents, Ising-FK, loop-soups and the
  Gaussian free field}},
        date={2016},
     journal={Electron. Commun. Probab.},
      volume={21},
       pages={7 pp.},
         url={http://dx.doi.org/10.1214/16-ECP4733},
}

\bib{Pei36}{article}{
      author={Peierls, R.},
       title={On {I}sing's model of ferromagnetism.},
        date={1936},
     journal={Math. Proc. Camb. Phil. Soc.},
      volume={32},
       pages={477\ndash 481},
}

\bib{postnikov}{unpublished}{
      author={Postnikov, A.},
       title={{Total positivity, Grassmannians, and networks}},
        date={2006},
        note={arXiv:math/0609764},
}

\bib{PosSpeWil}{article}{
      author={Postnikov, Alexander},
      author={Speyer, David},
      author={Williams, Lauren},
       title={Matching polytopes, toric geometry, and the totally non-negative
  grassmannian},
        date={2009Sep},
        ISSN={1572-9192},
     journal={Journal of Algebraic Combinatorics},
      volume={30},
      number={2},
       pages={173\ndash 191},
         url={https://doi.org/10.1007/s10801-008-0160-1},
}

\bib{Rao17}{unpublished}{
      author={Raoufi, Aran},
       title={Translation invariant ising Gibbs states, general setting},
          date={2017},
        note={arXiv:1710.07608},
}

\bib{Rao16}{unpublished}{
      author={{Raoufi}, Aran},
       title={A note on continuity of magnetization at criticality for the
  ferromagnetic {I}sing model on amenable quasi-transitive graphs with
  exponential growth},
        date={2016},
        note={arXiv:1606.03763},
}

\bib{talaska}{article}{
      author={Talaska, Kelli},
       title={A formula for {P}l\"ucker coordinates associated with a planar
  network},
        date={2008},
        ISSN={1073-7928},
     journal={Int. Math. Res. Not. IMRN},
       pages={Art. ID rnn 081, 19},
         url={http://dx.doi.org/10.1093/imrn/rnn081},
}

\bib{Wae41}{article}{
      author={van~der Waerden, B.~L.},
       title={Die lange {R}eichweite der regelmassigen {A}tomanordnung in
  {M}ischkristallen.},
        date={1941},
     journal={Z. Physik},
      number={118},
       pages={473\ndash 488},
}

\bib{Yan52}{article}{
      author={Yang, C.N.},
       title={The spontaneous magnetization of a two-dimensional {I}sing
  model},
        date={1952},
     journal={Phys. Rev. (2)},
      volume={85},
       pages={808\ndash 816},
}

\end{biblist}
\end{bibdiv}

\end{document}